\documentclass[a4paper,12pt,reqno]{amsart}

\usepackage{a4}
\usepackage[utf8]{inputenc}
\usepackage[english]{babel}
\usepackage{verbatim}
\usepackage{comment}
\usepackage{color}

\usepackage{amssymb,stmaryrd}
\usepackage{color}
\usepackage{graphicx}
\usepackage{floatflt}
\usepackage{float} 		

\usepackage{amssymb}
\usepackage{amsthm}
\usepackage{amscd}
\usepackage{amsmath} 
\usepackage{mathtools}
\usepackage[colorlinks=true]{hyperref}
\usepackage[all]{xy}
\usepackage{url}

\newtheorem{definition}{Definition}
\newtheorem{properties}[definition]{Properties}
\newtheorem{example}[definition]{Example}

\newtheorem{theorem}[definition]{Theorem}
\newtheorem*{theorem*}{Theorem}
\newtheorem{lemma}[definition]{Lemma}
\newtheorem*{lemma*}{Lemma}
\newtheorem{remark}[definition]{Remark}

\newtheorem{corollary}[definition]{Corollary}
\newtheorem*{corollary*}{Corollary}

\newtheorem{proposition}[definition]{Proposition}

\def\XXint#1#2#3{{\setbox0=\hbox{$#1{#2#3}{\int}$}
		\vcenter{\hbox{$#2#3$}}\kern-.5\wd0}}

\makeatletter
\@addtoreset{definition}{section}
\@addtoreset{equation}{section}
\makeatother

\newcommand{\cS}{\mathcal{S}}

\DeclareMathOperator{\Res}{\mathrm{Res}}

\title{Quantum Curves in the Context of Symplectic Duality}
\author{Alexander Hock}

\address{Institute for Mathematics, University of Heidelberg, Mathematikon, Berliner Str. 41-49,
	69120 Heidelberg, Germany} 
\email{alexander.hock@uni-heidelberg.de}

\author{Sergey Shadrin}

\address{Korteweg-de Vries Institute for Mathematics, University of Amsterdam, P.O. Box 94248, 1090GE Amsterdam, The Netherlands}

\email{s.shadrin@uva.nl}

\makeatletter
\def\@tocline#1#2#3#4#5#6#7{\relax
  \ifnum #1>\c@tocdepth 
  \else
    \par \addpenalty\@secpenalty\addvspace{#2}%
    \begingroup \hyphenpenalty\@M
    \@ifempty{#4}{%
      \@tempdima\csname r@tocindent\number#1\endcsname\relax
    }{%
      \@tempdima#4\relax
    }%
    \parindent\z@ \leftskip#3\relax \advance\leftskip\@tempdima\relax
    \rightskip\@pnumwidth plus4em \parfillskip-\@pnumwidth
    #5\leavevmode\hskip-\@tempdima
      \ifcase #1
       \or\or \hskip 1em \or \hskip 2em \else \hskip 3em \fi%
      #6\nobreak\relax
    \hfill\hbox to\@pnumwidth{\@tocpagenum{#7}}\par
    \nobreak
    \endgroup
  \fi}
\makeatother

\begin{document}

\begin{abstract} We discuss how to use the recent progress in understanding of the $x$-$y$ duality and symplectic duality in the theory of topological recursion and its generalizations in order to efficiently compute the quantum spectral curve operators for the wave functions with arbitrary base points. The paper also contains an overview of recent generalizations of the setup of topological recursion prompted by the progress in understanding the $x$-$y$ duality
\end{abstract}

\maketitle

\tableofcontents

\section{Introduction}

Topological recursion (TR) was proposed in its original form in the works of Chekhov, Eynard, and Orantin~\cite{Chekhov:2006rq,Eynard:2007kz} as a computational tool for the cumulants of the matrix models. In the recent years it has appeared to be a universal and ubiquitous interface between enumerative geometry, combinatorics, and integrability, with further applications in knot theory, free probability theory, mirror symmetry, representation theory, and many other areas of mathematics and mathematical physics. It is known to substantially enrich and facilitate strong interaction of all areas where it is applied and was instrumental in settling long standing open questions. 

The input data of topological recursion consists of a complex curve $\Sigma$, which is called the spectral curve, with some additional structure. Symplectic duality represents a system of intrinsic transformations of the input data of topological recursion that preserves the spectral curve. These transformations appear to be very powerful computational tools themselves, but also they prompt various extensions of the original setting of topological recursion. To this end, there is a variety of generalizations of the original setting of topological recursion, motivated both by compatibility with the symplectic duality and by further applications.

Topological recursion produces a system of symmetric differentials on the Cartesian powers of the underlying spectral curve. There is a system of further natural (from the point of view of symplectic duality) objects called extended (half-) differentials associated to a system of symmetric differentials, which include the so-called  kernel and its specialization called the wave function. The latter ones have independent interest in applications. 
A remarkable and quite general result states that there is a differential equation for the wave function that one can obtain as a quantization of the underlying spectral curve.

The key observation that we promote in this paper is that the theory of symplectic duality provides concrete tools to derive the quantum spectral curves in a variety of important examples. This approach was already discovered by Weller in~\cite{Weller:2024msm} in application to wave functions under some extra restrictive assumptions, but we extend this type of arguments to the whole kernel. Along this way we treat new examples in the generalized setting as well as conceptually revisit and substantially simplify the derivation in a number of important examples already covered in the literature.  

\subsection{Organization of the paper}

Section~\ref{sec:Bacxground} contains an overview of various species of topological recursion and the context where these extensions emerge in a natural way, namely, the theory of $x$-$y$ duality and more general symplectic duality. We also recall the concept of quantum spectral curve and basic facts about it. 

In Section~\ref{sec:QuantumCurve-dualities} we present general formulas of the action of the $x$-$y$ and symplectic dualities on the quantum spectral curve operators (and, in some cases, their dual cousins), which prompts an efficient derivation algorithm.

Section~\ref{sec:examples} contains an extensive list of examples. Many of them concern re-derivation of the quantum curves already derived by different methods in the literature, and yet we got with our approach much simpler formulas with an arguably more conceptual derivation.

\subsection{Acknowledgments} A.~H. was supported through the Walter-Benjamin fellowship\footnote{\
	``Funded by
	the Deutsche Forschungsgemeinschaft (DFG, German Research
	Foundation) -- Project-ID 465029630 and Project-ID 51922044''} and
through the project \ ``Topological Recursion, Duality and
Applications''\footnote{\ ``Funded by
	the Deutsche Forschungsgemeinschaft (DFG, German Research
	Foundation) -- Project-ID 551478549''}. 
S.~S. was supported by the Dutch Research Council.

\section{Background}

\label{sec:Bacxground}
This section provides background on some recent developments as well as well-known facts about topological recursion. First, we will review the original formulation of TR. After that, we will discuss various extensions of TR that generalize it in different directions. Some of these generalizations are very recent and still under exploration. We then review the application of the original formulation of TR to the quantum curve and the construction of a wave function. Notably, the construction of the quantum curve is not yet established for the different types of TR generalizations. In the last subsection of this section, we will review the recent development of the $x$-$y$ duality, which is already related to TR generalizations. In fact, the $x$-$y$ duality motivated the definition of some of these generalizations.

\subsection{CEO-TR}
The original definition of TR arose from considerations in matrix models \cite{Chekhov:2006rq}, but it was clear from the beginning that the definition could be formulated in a more general setting \cite{Eynard:2007kz}. Due to its origin, we will call this version CEO-TR after Chekhov, Eynard and Orantin. 

The basic idea is that TR can be viewed as a (universal) procedure that takes some initial data as input and generates an infinite family of multi-differential forms, $\omega_{g,n}$, as output. Depending on the input, the output can have various fascinating applications and interpretations. For completeness, we will provide the original formulation of TR, though it will not be used explicitly in this article.

The input is the spectral curve data, a tuple $(\Sigma, x, y, B)$, where $\Sigma$ is a \textup{compact} Riemann surface, and $x, y\colon \Sigma \to \mathbb{C}$ are \textup{meromorphic functions} such that $x$ has \textup{simple} ramification points on $\Sigma$, and $y$ is regular and $dy$ is non-vanishing at the ramification points of $x$. The bi-differential $B$ is symmetric, holomorphic outside the diagonal, with a double pole on the diagonal 
with bi-residue $1$. For a choice of $\mathfrak{A}$- and $\mathfrak{B}$-cycles on $\Sigma$, the bi-differential $B$ is uniquely determined by normalizing along the $\mathfrak{A}$-cycles.

\begin{definition}[\cite{Eynard:2007kz}] \label{def:claTR} 
    CEO-TR generates a family of multi-differentials $\omega_{g,n}$ on $\Sigma^n$, $g\geq 0$, $n\geq 1$, with $\omega_{0,1}=y\,dx$ and $\omega_{0,2}=B$. Furthermore, for negative Euler characteristic $\chi=2-2g-n<0$, all $\omega_{g,n}$ are defined recursively in local coordinates via 
\begin{align}
  \label{eq:TR-intro}
&  \omega_{g,n+1}(I,z)
  =\sum_{p_i\in Ram(x)}
  \Res\displaylimits_{q\to p_i} \frac{\frac{1}{2}\int^{q}_{\sigma_i(q)}
    B(z,\bullet)}{\omega_{0,1}(q)-\omega_{0,1}(\sigma_i(q))}
  \bigg(
  \omega_{g-1,n+2}(I, q,\sigma_i(q))\\
  &\qquad \qquad\qquad\qquad\qquad\qquad
  +
   \sum_{\substack{g_1+g_2=g\\ I_1\uplus I_2=I\\
            (g_i,I_i)\neq (0,\emptyset)}}
   \omega_{g_1,|I_1|+1}(I_1,q)
  \omega_{g_2,|I_2|+1}(I_2,\sigma_i(q))\!\bigg). \nonumber
\end{align}
The following notation is used:
\begin{itemize}
\item  $I=\{z_1,\dots,z_n\}$ is a collection of $n$ local coordinates $z_j$, and $I_1,I_2\subset I$ are any disjoint subsets of $I$ with union to be $I=I_1\uplus I_2$
\item $Ram(x)\coloneqq \{p_1,\dots,p_N\}$, $N\geq 0$, is the set of ramification points of $x$ 
\item  the local Galois involution $\sigma_i\neq \mathrm{id}$ with
$x(q)=x(\sigma_i(q))$ is defined in the vicinity of $p_i\in Ram(x)$ with fixed point $p_i=\sigma_i(p_i)$
\end{itemize}
\end{definition}
\begin{remark}\label{Rem:ClasTR}
    We highlight a few properties specific to the formula \eqref{eq:TR-intro} and to the notion of \textup{CEO-TR}. The Riemann surface $\Sigma$ is assumed to be compact and connected, both functions $x$ and $y$ are assumed to be meromorphic, ramification points of $x$ are assumed to be simple, and $y$ is regular and $dy$ is non-vanishing at the ramification points of $x$.

However, in most applications of CEO-TR, some of the properties above are not satisfied, but the recursion still works and produces meaningful answers. 
\end{remark}
\begin{remark}
    The formula \eqref{eq:TR-intro} can, of course, also be used for non-compact and/or disconnected Riemann surfaces, consisting, for instance, of open discs defined around the simple ramification points of $x$. However, in this situation, $\omega_{g,n}$ is only locally defined, with no global extension.

    Another option is to have $x$ and $y$ just locally defined on some union of open disks around points in $Ram(x)\subset \Sigma$, and $B$ being defined on the whole Riemann surface $\Sigma$, still not necessarily compact. In this case $\omega_{0,1}$ is defined locally, but all other $\omega_{g,n}$'s are constructed globally. 
\end{remark}
We want to collect some properties that follow more or less directly from the definition, see \cite{Eynard:2007kz} for the details.
\begin{properties}\label{properties:omega}
For $2g+n-2> 0$, the $\omega_{g,n}$'s
    \begin{itemize}
    \item are symmetric, i.e. $\omega_{g,n}(...,z_i,..z_j,..)=\omega_{g,n}(...,z_j,..z_i,..)$ for all $i,j$;
    \item have poles only located at the ramification points $p\in Ram(x)$;
    \item have vanishing residue at the ramification points $p\in Ram(x)$;
    \item are homogeneous, i.e. if we multiply $\omega_{0,1}$ by some non-vanishing constant $\lambda$ that is  $\omega_{0,1}\mapsto \lambda \omega_{0,1} $ then $\omega_{g,n}\mapsto \lambda^{2-2g-n}\omega_{g,n}$
    \item are normalized along the $\mathfrak{A}$-cycles, i.e. $\oint_{z_i\in \mathfrak{A}_j}\omega_{g,n}(...,z_i,..)=0$ for all $i=1,\dots,n$, $j=1,\dots,g$;
    \item are invariant under the following symplectic transformation of $x$ and $y$
    \begin{align*}
        &(x,y)\quad \mapsto\quad  (x,y+R(x))\quad \text{for some rational function $R$}\\
        &(x,y)\quad \mapsto\quad  \big(\frac{ax+b}{cx+d},(cx+d)^2y\big)\quad \text{ with }\begin{pmatrix}
		a & b \\
		c & d 
		\end{pmatrix}\in SL_2(\mathbb{C}).
    \end{align*}
    (N.B.: in both cases there are some extra open conditions on the parameters of transformations.)
\end{itemize}
\end{properties}
The name of symplectic transformation originates from the fact that the formal symplectic form $dx\wedge dy$ is invariant under the mentioned transformations. 

\begin{remark} Under some extra assumptions on $x$ and $y$, namely, if all ramification points of $y$ are simple and $x$ is regular and $dx$ is non-vanishing at $dy=0$, we can consider a third symplectic transformation not leaving $\omega_{g,n}$ invariant: $(x,y)\to (-y,x)$.  
\end{remark}

Even under the most restrictive assumptions CEO-TR provides already very important mathematics as an output in terms of $\omega_{g,n}$.
\begin{example}
	\label{ex:KontWitt}
    Let the spectral curve be 
    $$(\Sigma =\mathbb{P}^1, x = {z^2}, y = z, B=\frac{dz_1 , dz_2}{(z_1 - z_2)^2}).$$ 
    Then the output of CEO-TR encodes the intersection numbers of the $\psi$-classs on the moduli spaces of complex curves $\overline{\mathcal{M}}_{g,n}$. 
\end{example}

\subsection{Extensions of topological recursion}\label{Sec:ExtTR}

It is very natural to ask how TR behaves locally under certain deformations and limits \cite{Borot:2023wik} in which, for instance, ramification points collide and give rise to higher-order ramification. Clearly, Definition \ref{def:claTR} cannot be used  verbatim if we have a ramification point of higher degree.

The meromorphicity of $x$ and $y$ is also violated in crucial applications. Requiring only that $dx$ and $dy$ be meromorphic is more general and may allow for logarithmic behavior in $x$ and $y$. For instance, supported by explicit computations in perturbation theory, an application of CEO-TR was conjectured and then proved for the $B$-model topological string theory of toric Calabi-Yau 3-folds (the BKMP conjecture, \cite{Bouchard:2007ys}, the proofs are available in \cite{Chen:2009ws,Zhou:2009ea,Eynard:2012nj,Fang:2016svw}). This conjecture uses formula \eqref{eq:TR-intro} but with a curve of the form $Pol(e^x, e^y) = 0$. In other words, $x,y$ assumed not to be meromorphic. However, the conjecture should hold for \emph{generic framing}, which in our language means after a suitable symplectic transformation. 

Consequently, CEO-TR is just the tip of an iceberg, where several different (but rigid) types of extensions need to be explored. 
Four different type of extensions are reviewed in this subsection: 
\begin{itemize}
	\item irregular topological recursion of Chekhov-Norbury (CN-TR);
	\item topological recursion for higher order ramifications of Bouchard-Eynard (BE-TR);
	\item logarithmic topological recursion (Log-TR);
	\item generalized topological recursion (Gen-TR).
\end{itemize}
We will not give explicit formulas, but rather explain the properties those extensions have, and refer to the literature for precise definitions. The interrelation of these extensions is pictorially presented in Figure~\ref{Gen-TR}:

\begin{figure}[h]
	\includegraphics[width=0.6\textwidth]{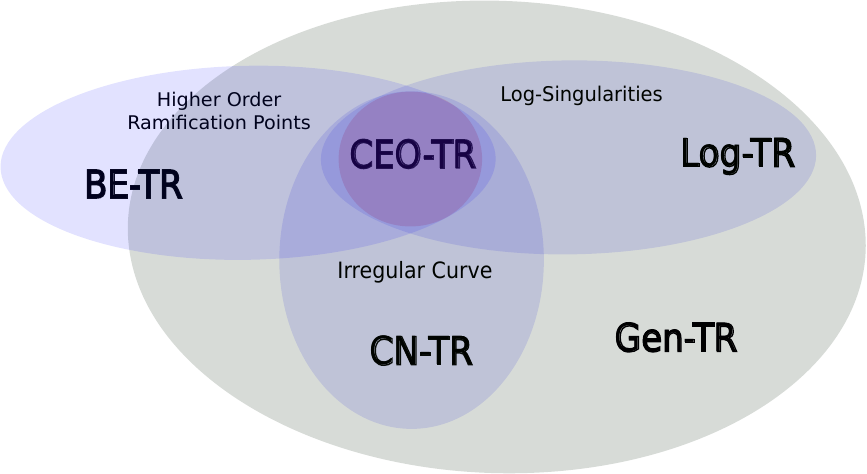}
	\caption{A schematic overview of the different versions of topological recursion. The most recent extension, Gen-TR, reproduces CEO-TR, Log-TR and CN-TR in general and BE-TR under some additional assumption.}
	\label{Gen-TR}
\end{figure}

\subsubsection{Irregular topological recursion}
The term \textit{irregular TR} refers to the behavior of $y$ around the ramification points $Ram(x)$. We relax the condition that $y$ is regular at the ramification points of $x$, now allowing $y$ to have a simple pole at $p_i \in Ram(x)$.  Consequently, $\omega_{0,1} = y \, dx$ remains regular at these ramification points. We still assume if $y$ is regular at $p_i\in Ram(x)$ to have $dy(p_i)\neq 0$.

This type of spectral curves has been studied in various examples, such as in \cite{Do:2014ncn,Do:2016odu}, with a most general exposition in~\cite{Chekhov:2017aot}. To be consistent with nomenclature, we refer to irregular spectral curves with simple ramification points of $x$ as \textit{CN-TR} after Chekhov and Norbury. The spectral curve consists of the same initial data $(\Sigma, x, y, B)$ as in CEO-TR, and the recursive definition of the multi-differentials coincides with \eqref{eq:TR-intro}. All Properties \ref{properties:omega} concerning the multi-differentials $\omega_{g,n}$ remain satisfied in the case of CN-TR.

Importantly, irregular spectral curves led to the definition of new cohomology classes on the moduli space of complex curves~ \cite{Norbury:2017eih}:
\begin{example}\label{ex:Norbury}
	Let the spectral curve be 
	$$(\Sigma =\mathbb{P}^1, x = {z^2}, y = \frac 1z, B=\frac{dz_1 , dz_2}{(z_1 - z_2)^2}).$$ 
	Then the output of CN-TR encodes the intersection numbers of the $\psi$-classes with the so-called Norbury $\Theta$-class on $\overline{\mathcal{M}}_{g,n}$. 
\end{example}

\subsubsection{Higher order topological recursion}
The aim is to construct again a local formula which computes 
a family of multi-differentials $\omega_{g,n}$ where the ramified covering $x$ can have higher order ramifications. The correct definition should be compatible with taking limits of CEO-TR of Definition \ref{def:claTR} with colliding ramification points. An extension in this sense was suggested by Bouchard and Eynard \cite{Bouchard:2012yg} and is called \textit{BE-TR}. 

The explicit definition of BE-TR will not be recalled here, since the definition is quite involved. We refer to \cite{Bouchard:2012yg} for explicit formulas. Nevertheless, we want to emphasize important facts and differences. The initial data of BE-TR consists again of the spectral curve $(\Sigma,x,y,B)$ now allowing $x$ to have higher order ramification points $Ram(x)\coloneqq \{p_1,\dots,p_N\}$, where $y$ is assumed to be regular and $dy$ non-vanishing at $p_i\in Ram(x)$. The defining formula for $\omega_{g,n}$ in BE-TR depends on the order of the ramification points $p_i\in Ram(x)$. The formula is not necessarily quadratic as in \eqref{eq:TR-intro}, nor is the kernel the same, nor is the integrand of Euler characteristic $\chi + 1$ if we want to compute $\omega_{g,n}$ with Euler characteristic $\chi = 2 - 2g - n$.
These drastic changes of the formula in BE-TR incorporate the fact that  the local Galois action at 
a ramification point $p_i \in \text{Ram}(x)$ of higher order is no longer generated by an involution.

BE-TR provides as an output multi-differentials $\omega_{g,n}$ satisfying again Properties \ref{properties:omega}. Moreover, BE-TR is compatible with CEO-TR in the limit of colliding ramification points, see \cite{Bouchard:2012yg,Borot:2023wik}.  
The prime example of BE-TR reads: 
\begin{example}
     Let the spectral curve be $$(\Sigma=\mathbb{P}^1,x=z^{r},y=z,B=\frac{dz_1\,dz_2}{(z_1-z_2)^2}).$$ 
     Then the output of BE-TR encodes the intersection numbers of the $\psi$-classes with the so-called $r$-spin Witten class on $\overline{\mathcal{M}}_{g,n}$.
\end{example}

\begin{remark} \label{rem:BE-irreg}
    For BE-TR, we can also relax the condition on $y$ at $p_i \in \text{Ram}(x)$, allowing $y$ to have a simple pole or to remain regular with $dy(p_i) \neq 0$. However, generalizing the behavior of $y$ at $p_i \in \text{Ram}(x)$ in BE-TR by allowing $y$ to have higher-order poles 
    poses serious issues: the formula of BE-TR with higher order poles 
    of $y$ at $p_i \in \text{Ram}(x)$ can generate multi-differentials $\omega_{g,n}$ that are {not} symmetric~\cite{MR4744795}, which is one of the most desired properties in \ref{properties:omega}. The latter problem can be fixed by using the generalized TR that we discuss below, see Sections~\ref{sec:Gen-TR} and~\ref{sec:rs-curve}.
\end{remark}

\subsubsection{Logarithmic topological recursion}\label{sec:logTR}
The extension of TR taking care properly of logarithmic poles is called \textit{Log-TR}, and was defined in~\cite{Alexandrov:2023tgl-LogTR}, inspired by earlier insights related to the $x$-$y$ duality observed in~\cite{Hock:2023dno}.  
The initial data for Log-TR is again the spectral curve of the form $(\Sigma,x,y,B)$, now allowing more general $x,y$. This extension of TR will include local logarithms for $x$ and $y$ such that $dx$ and $dy$ are meromorphic; we are still assuming that the ramification points of $x$ are again simple. From another perspective we can say that $dx$ and $dy$ are now allowed to have non-vanishing residues. 

\begin{remark}
In principle, one might just take the original CEO-TR and allow $x$ and $y$ to have logarithmic singularities, but this is incompatible with the limiting procedures and symplectic transformations. 

An important fact is that Log-TR is equal to CEO-TR if $y$ has no logarithmic poles or if the logarithmic poles of $y$ are canceled out by poles of $x$.
However, if no cancellation happens CEO-TR would provide misleading results (see e.g.~ \cite{Bouchard:2011ya}).	
\end{remark}

The difference between CEO-TR and Log-TR appears when so-called Log-TR-vital logarithmic singularities exist. We say that the primitive $y$ of the differential $dy$ has \textit{logarithmic
singularity} at some point $a$ on $\Sigma$ if $dy$ has a pole at $a$ with nonzero residue. A logarithmic singularity of $y$ is called \textit{Log-TR-vital} if this pole of $dy$ is simple and $dx$ has no pole at this point.

Let $a_1, \dots , a_M$ be the Log-TR-vital singular points of $y$. We denote the residues of $dy$
at these points by $\alpha_1^{-1},\dots,\alpha_M^{-1}$, respectively. That is, the principal part of $dy$ near $a_i$ is
given by $\alpha_i^{-1}dz/(z-a_i)$ in any local coordinate $z$.
Consider the germs of the principal parts of meromorphic 1-forms in the neighborhoods
of $a_i, i = 1, . . . , M$, defined as the principal parts of coefficients of positive powers of $\hbar$ in
the following expressions:
\begin{align}
    \left(\frac{1}{\alpha_i S(\alpha_i \hbar \partial_x)}\log (z-a_i)\right)dx,
\end{align}
where $S(t)=\frac{e^{t/2}-e^{-t/2}}{t}=\sum_{k=0}^\infty \frac{t^{2k}}{4^k(2k+1)!}$ understood as formal power series and $\partial_x=\frac{d}{dx}=\frac{1}{x'(z)}\frac{d}{dz}$ in some local coordinate $z$.
Then, Log-TR is defined by adding an extra summand
\begin{align}
	\label{eq:TR-introLog}
	\delta_{n,0}\sum_{i=1}^M\Res\displaylimits_{q\to a_i}\int_{a_i}^q \omega_{0,2}(z, \bullet)[\hbar^{2g}]\left(\frac{1}{\alpha_i S(\alpha_i \hbar \partial_{x(q)})}\log (q-a_i)\right)dx(q).
\end{align} 	
one the right hand side of the recursion~\eqref{eq:TR-intro}.

The output $\omega_{g,n}$ of Log-TR satisfies all but two properties \ref{properties:omega}. First, $\omega_{g,n}$ for $(g,n)=(g,1)$ have additionally poles at the Log-TR-vital points $a_i$. Second, the invariance under the $SL_2$-symplectic transformation $(x,y)\mapsto  \big(\frac{ax+b}{cx+d},(cx+d)^2y\big)$ can generate a $dy$ that is not meromorphic anymore making Log-TR inapplicable. 

\begin{remark}
    If $x$ is unramified, Log-TR produces for $2g+n-2>0$ vanishing $\omega_{g,n}$ for $n\geq 2$. However, all $\omega_{g,1}$ are non-trivial as long as Log-TR-vital singular points exist. 
\end{remark}

Some examples for Log-TR can be found in \cite{Hock:2023dno,Alexandrov:2023tgl-LogTR,Alexandrov:2024ajj}. In the context of the BKMP conjecture mentioned above, the distinction between Log-TR-vital and non-vital singularities is related to the choice of framing. To this end, one considers a basis spectral curve $H(e^x,e^y)=0$, $H$ is a polynomial, and its framed version  $H_f(e^x,e^y)=0$ obtained by a symplectic transformation $(x,y)\mapsto (x+fy,y)$, $f\in\mathbb{C}$. For a general choice of framing $f$, the logarithmic singularities of $y$ are non-vital, hence CEO-TR can be applied instead of Log-TR. However, it was observed in \cite{Bouchard:2011ya} that CEO-TR does not reprodece the B-model for some explicitly chosen framings, and these are exactly the cases where the logarithmic singularities of $y$ are Log-TR-vital, hence one have to apply Log-TR.

\begin{remark}
    Log-TR is a generalization of CEO-TR, but it has been argued \cite{Alexandrov:2023tgl-LogTR,Alexandrov:2024ajj} that Log-TR is \textup{the canonical extension} of CEO-TR which is necessary to include logarithmic singularities in a proper way. This article further supports this statement by the previous discussion and additionally in terms of looking at the corresponding quantum curve such that Log-TR generates the expected wave function even if $x$ and/or $y$ have no ramification points.
\end{remark}

\subsubsection{Generalized topological recursion}
\label{sec:Gen-TR}
Generalized topological recursion (Gen-TR) was defined in~\cite{Alexandrov:2024tjo}.  It is based on a different idea of input data and the right hand side of a recursion formula generalizes~\eqref{eq:TR-intro} to a universal formula that comes from the $x$-$y$-duality and connections to KP integrability. We explain the input data, referring to~\cite{Alexandrov:2024tjo} for further details. 

For a given point $q \in \Sigma$ and a local coordinate $z$ at this point, consider the local expressions of $dx$ and $dy$: \begin{align*} dx = az^{r-1}(1+\mathcal{O}(z)), dz, \quad dy = bz^{s-1}(1+\mathcal{O}(z)), dz, \quad a, b \neq 0, \quad r, s \in \mathbb{Z}. \end{align*} The point $q$ is called \textit{non-special} if either $r = s = 1$ or $r + s \leq 0$, and \textit{special} otherwise. Ramification points, logarithmic points of $x$ or $y$, and singular points are special unless the condition $r + s \leq 0$ is satisfied. A part of the input of Gen-TR is a choice of  a subset  $\mathcal{P}$ of the set of special points.

    The full input of Gen-TR is a tuple $(\Sigma, dx, dy, B, \mathcal{P})$, where $\Sigma$ is a Riemann surface, $x, y: \Sigma \to \mathbb{C}$ are functions on $\Sigma$ such that $dx$ and $dy$ are meromorphic, $B$ is a symmetric bi-differential with a double pole on the diagonal, bi-residue 1, and no other poles (for a compact $\Sigma$, we normalize it on the $\mathfrak{A}$-cycles). $\mathcal{P}$ is an arbitrary subset of the set of special points.

\begin{definition}[\cite{Alexandrov:2024tjo}] \label{def:GTR}

    Gen-TR generates a family of multi-differentials $\omega_{g,n}$ on $\Sigma^n$, $g\geq 0$, $n\geq 1$, $2g-2+n\geq 0$, with $\omega_{0,2}=B$. Furthermore, for negative Euler characteristic $\chi=2-2g-n<0$, all $\omega_{g,n}$ are defined recursively in local coordinates via 

\begin{align}
  \label{eq:GTR}
  \omega_{g,n+1}(I, z) = \sum_{p \in \mathcal{P}} \Res\displaylimits_{q \to p} \bigg(\int^{q} B(\cdot, z)\bigg) \tilde{\omega}_{g,n+1}(I,q),
\end{align}
where $\tilde{\omega}_{g,n+1}$ is an explicit global expression built recursively from all $\omega_{g',n'}$ with $2g + n - 2 > 2g' + n' - 2$, $dx$, and $dy$, but independent of the order of the local behavior of $x$ and $y$ around the special point $p \in \mathcal{P}$.
\end{definition}

The multi-differentials $\omega_{g,n}$ generated by Gen-TR are symmetric, have poles located only at points in $\mathcal{P}$ (note that $\mathcal{P}$ does not necessarily consist of the ramification points of $x$). Furthermore, the $\omega_{g,n}$ generated by Gen-TR are not necessarily invariant under the two symplectic transformations $(x, y) \mapsto (x, y + R(x))$ and $(x, y) \mapsto \big(\frac{ax + b}{cx + d}, \frac{(cx + d)^2}{ad - bc} y\big)$.

If the spectral curve of Gen-TR satisfies the setup of CEO-TR (respectively, CN-TR, BE-TR, Log-TR) and the set $\mathcal{P}$ is chosen to be the set of zeroes of $dx$ (and, in the case of Log-TR, $\mathcal{P}$ should also include all Log-TR-vital points), then Gen-TR produces the same differentials as CEO-TR (respectively, CN-TR, BE-TR, Log-TR)~\cite{Alexandrov:2024tjo}.

\begin{remark} It is mentioned in Remark~\ref{rem:BE-irreg} that one can consider a version of BE-TR with singularities of $y$ at the critical points of $x$ such that $ydx$ is regular. Under some conditions BE-TR still produces symmetric differentials, and in general they are different from the ones produced by Get-TR even if the set $\mathcal{P}$ is chosen to be the set of the critical points of $x$.
\end{remark}

\begin{example}
    Let the spectral curve of Gen-TR be 
    $$(\Sigma = \mathbb{P}^1, dx = dz^r, dy = dz^{s-r}, B = \frac{dz_1\,dz_2}{(z_1 - z_2)^2}, \mathcal{P} = \{0\}), $$
    where $r=2,3,\dots$ and  $s=1,\dots,r-1$.  	Then the output of Gen-TR encodes the intersection numbers of the $\psi$-classes with the so-called  $\Theta^{(r,s)}$-classes on $\overline{\mathcal{M}}_{g,n}$. The latter classes are the top degree parts of some $\Omega$ classes (also called Chiodo classes), see~\cite{chidambaram2023relationsoverlinemathcalmgnnegativerspin} for $s=r-1$ and ~\cite{CGS} for general $s$. 
    
	Note that in this case the BE-TR (or rather its irregular version as in Remark~\ref{rem:BE-irreg}) can be applied only if $r\equiv \pm 1 \mod s$ (otherwise the differentials it produces are not symmetric) and in most cases the enumerative interpretation of the output it is not known. 
\end{example}


\subsection{Quantum curve from topological recursion}
The quantum curve is a quantization of the original spectral curve in the sense of classical quantum mechanics. For meromorphic functions $x$ and $y$ on a compact Riemann surface $\Sigma$, there exists a polynomial $P(x, y)$ such that 
\begin{align}\label{quantumcurve}
    \{P(x(z), y(z)) = 0 \mid x, y : \Sigma \to \mathbb{C}, \forall z \in \Sigma\}.
\end{align}
A quantization of the complex curve \eqref{quantumcurve} is an $\hbar$-deformed operator $\hat{P}_\hbar (\hat{x}, \hat{y})$ with the semi-classical limit
\begin{align*}
    \lim_{\hbar \to 0} \hat{P}_\hbar (x, y) = P(x, y),
\end{align*}
where the operators $\hat{x}$ and $\hat{y}$ are defined by the following action on some function $f(x)$:
\begin{align*}
    \hat{x} f(x) &= x \cdot f(x), \qquad 
    \hat{y} f(x) = \hbar \frac{d}{dx} f(x).
\end{align*}
The first observation that CEO-TR can be used to derive a quantum curve likely goes back to \cite{Bergere:2009zm}. The main idea is that a wave function $\psi_{x_0}(x)$, which is annihilated by the quantum curve, can be constructed from the family $\{\omega_{g,n}\}$.

\begin{definition}\label{def:wavefunction}
For a given family $\omega_{g,n}$, we define the perturbative wave function by
    \begin{align}\label{pertwave}
    \psi_{x(z_0)}(x(z)) = \exp\bigg[\sum_{\substack{n \geq 1\\ g \geq 0}} \frac{\hbar^{2g + n - 2}}{n!} \int_{z_0}^z \cdots \int_{z_0}^z \bigg(\omega_{g,n}(z_1, \dots, z_n) 
    \\ \notag - \delta_{n,2} \delta_{g,0} \, \frac{dx(z_1) dx(z_2)}{(x(z_1) - x(z_2))^2}\bigg)\bigg],
\end{align}
where $z_0 \in \Sigma$ is a base point lying in the same fundamental domain as $z$.
\end{definition}

As indicated by the definition, the wave function depends on the base point, which leads to different quantizations of the $\hat{P}_\hbar (\hat{x}, \hat{y})$ of the classical spectral curve $P(x, y) = 0$. Bouchard and Eynard proved the following theorem:
\begin{theorem}[\cite{JEP_2017__4__845_0}]\label{thm:quantumcurveBE}
    Let $\{\omega_{g,n}\}$ be generated by CEO-TR. If the underlying spectral curve has genus $0$, then there exists a quantum curve $\hat{P}_\hbar (\hat{x}, \hat{y})$ that annihilates the wave function $\psi_{x(z_0)}(x(z))$, i.e.,
    \begin{align*}
    \hat{P}_\hbar(\hat{x}, \hat{y}) \psi_{x(z_0)}(x(z)) = 0,
    \end{align*}
    where the quantum curve $\hat{P}_\hbar (\hat{x}, \hat{y})$ has only finitely many $\hbar$-corrections and has in the semiclassical limit $\lim_{\hbar \to 0} \hat{P}_\hbar (x, y) $ the polynomial $P(x,y)$ as a factor. 
\end{theorem}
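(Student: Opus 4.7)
The plan is to exploit the genus zero assumption to realize $\Sigma = \mathbb{P}^1$ with a global rational coordinate, to identify $P$ as (a factor of) the characteristic polynomial of $y$ viewed as a multi-valued function of $x$, and to construct the quantum operator by a WKB iteration whose termination is forced by the rational structure of the $\omega_{g,n}$.

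First I would fix the setup. Writing $P(x,y) = \sum_{k=0}^{d} p_k(x)\,y^k$ with $d = \deg_y P$, for generic $x_0 \in \mathbb{C}$ the fibre $x^{-1}(x_0)$ consists of $D := \deg(x)$ points $z^{(\alpha)}(x_0)$, and by the very definition of $P$ the values $y\bigl(z^{(\alpha)}(x_0)\bigr)$ are roots of $y \mapsto P(x_0,y)$. For each sheet $\alpha$ I would define the sheeted wave function $\psi^{(\alpha)}(x) := \psi_{x(z_0)}\bigl(x(z^{(\alpha)}(x))\bigr)$ by Definition~\ref{def:wavefunction} with integration contour lying in sheet $\alpha$. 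The WKB expansion of the exponent in \eqref{pertwave} then reads
\begin{align*}
\hbar\,\partial_x \log \psi^{(\alpha)}(x) = y\bigl(z^{(\alpha)}(x)\bigr) + \sum_{k \ge 1} \hbar^{k}\, S_k^{(\alpha)}(x),
\end{align*}
where each $S_k^{(\alpha)}$ is a polynomial expression in the $\omega_{g,n}$ (and their $x$-derivatives) restricted to sheet $\alpha$.

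Next I would construct $\hat P_\hbar$ order by order in $\hbar$. Start from the naive quantization $\hat P_0 := \sum_k p_k(\hat x)\,(\hbar\partial_x)^k$; the leading WKB identity gives $\hat P_0 \,\psi^{(\alpha)} = O(\hbar)\,\psi^{(\alpha)}$. Applying $\hat P_0$ produces, at each order $\hbar^j$, a symmetric combination of the $\omega_{g,n}$ on all sheets meeting at a given ramification point. The abstract loop equations, equivalent to the CEO-TR recursion \eqref{eq:TR-intro}, state precisely that such symmetric combinations are regular at the ramification points. This regularity is what allows the order-$\hbar^j$ obstruction to be written as $R_j(\hat x,\hat y)\,\psi^{(\alpha)}$ for some operator $R_j$ rational in $x$ and polynomial of degree $< d$ in $\hat y$. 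Subtracting these corrections inductively produces $\hat P_\hbar = \hat P_0 - \sum_{j \ge 1} \hbar^j R_j$ which annihilates $\psi$ as a formal $\hbar$-series.

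The main obstacle, and the core of the argument, is to show that this iteration terminates, that is, $R_j = 0$ for $j$ exceeding an explicit bound depending only on $D$, $d$, and the number of ramification points. This is where the genus zero hypothesis is essential: on $\mathbb{P}^1$ all $\omega_{g,n}$ are rational differentials whose poles are confined to the finitely many ramification points of $x$ (Properties~\ref{properties:omega}), so the $S_k^{(\alpha)}$ are rational in $x$ with uniformly bounded pole orders. A careful bookkeeping of the admissible pole structure of the $R_j$, combined with the loop-equation residue identities and the local Galois symmetry $\sigma_i$ between sheets at each ramification point, will force the cancellation of obstructions to be realized by $R_j \equiv 0$ once $j$ exceeds the stated bound. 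Taking $\hbar \to 0$ then yields a polynomial in $x$ and $y$ of degree $\le D$ in $y$ that vanishes whenever $P(x,y) = 0$, hence has $P$ as a factor; the extra factor $Q$ reflects the possible discrepancy between $D$ and $d$.
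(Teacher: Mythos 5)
First, a point of reference: the paper does not prove Theorem~\ref{thm:quantumcurveBE} at all --- it is quoted verbatim from Bouchard--Eynard \cite{JEP_2017__4__845_0} --- so your proposal can only be measured against the argument of that reference. Measured that way, there is a genuine gap, and it sits exactly where the content of the theorem lies. Your reduction step (any order-$\hbar^j$ obstruction on a genus-zero curve can be matched by an operator $R_j(\hat x,\hat y)$ of $\hat y$-degree $<d$ with coefficients rational in $x$) is correct but does not use the loop equations: since $\mathbb{C}(\Sigma)=\mathbb{C}(x)[y]/(P)$, this rewriting is available for an arbitrary formal series and always yields a formal $\hbar$-series operator annihilating $\psi$. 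The theorem is precisely the claim that only finitely many $R_j$ are nonzero, and for that you offer only ``careful bookkeeping \dots will force $R_j\equiv 0$''. That is an assertion, not an argument, and in the stated generality it fails: for a generic base point $z_0$ the quantum curve is \emph{not} of $\hat y$-degree $d$ with coefficients rational in $x$ alone --- the paper itself records, following \cite{Eynard:2017jij}, that it has degree up to $d^2$ with coefficients rational in both $x$ and $x_0$, and its Airy example $\hat y^2-\hat x-\hbar(\hat y-\hat y_0)/(\hat x-\hat x_0)$ already needs the operators $\hat x_0,\hat y_0$. So your degree-$<d$ iteration cannot terminate for generic $z_0$; the finiteness mechanism in \cite{JEP_2017__4__845_0} relies on placing the base point at a pole of $x$ and on an admissibility hypothesis on the Newton polygon of $P$ that the statement here suppresses.

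A second misstep: applying $\hat P_0$ to a single-sheet wave function $\psi^{(\alpha)}$ does not ``produce a symmetric combination of the $\omega_{g,n}$ on all sheets''; it produces data living on sheet $\alpha$ only. The symmetrization over the fibre $x^{-1}(x)$ has to be engineered, and that is the actual architecture of the Bouchard--Eynard proof: they introduce fibre-symmetrized combinations (their $U_k$'s, built from sums over $k$-element subsets of $x^{-1}(x)\setminus\{z\}$), prove via the higher abstract loop equations that suitable combinations of these are polynomial in $x$ with pole orders controlled by the Newton polygon, derive from this a closed first-order ODE system, and obtain the order-$d$ scalar ODE by elimination. The finiteness of the $\hbar$-corrections then drops out of the pole-order count in that polynomiality statement, not from an order-by-order cancellation of WKB obstructions. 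Your sketch identifies the right ingredients (genus zero, rationality, loop equations, Galois symmetry at ramification points) but does not assemble them into the step that actually proves the theorem.
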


\begin{remark}
	For special choices of $z_0$, in particular, if $x(z_0)$ is singular, the quantum curve simplifies.
\end{remark}

In other words, TR can be used to derive a quantum curve in a canonical way, which depends only on the choice of the integration divisor, the base point. The ordering of the operators $\hat{x}, \hat{y}$ in $\hat{P}_\hbar$ as well as the derivation of the $\hbar$-corrections depend on the choice of the base point. 
From a quantum mechanical perspective, the theorem states that for a genus zero algebraic spectral curve, CEO-TR (implying simple ramification points) generates the WKB expansion. Without explicitly applying the formulas of TR, \cite{JEP_2017__4__845_0} provides an algorithm to obtain the quantum curve that eventually annihilates the wave function.

For generic base point $z_0$, the construction and result is actually more involved. In \cite{Eynard:2017jij} it was shown that if $P(x,y)$ has degree $d$, then quantum curve for generic base point $z_0$ has at most degree $d^2$ and the coefficients are rational function of $x(z)=x$ and $x(z_0)=x_0$. For generic base point, it might also be useful to write the quantum curve not just with the operators $\hat{x}=x$ and $\hat{y}=\hbar\frac{d}{dx}$ but also with $\hat{x}_0=x_0$ and $\hat{y}_0=-\hbar \frac{d}{dx_0}$. One might also write the quantum curve as $\hat{P}_\hbar(\hat{x},\hat{y};\hat{x}_0,\hat{y}_0)$. This discussion highlights the issues of different representations of quantum curves which are everything else than canonical.

\begin{example}
    Let us discuss the Airy spectral curve of Example \ref{ex:KontWitt} from the quantum curve perspective. The wave function for generic base point $x_0=x(z_0)$ is written in terms of the asymptotic of the Airy function \cite{Bergere:2009zm}
    \begin{align*}
        Ai_{\pm \hbar}(x)=\frac{e^{\mp\frac{2}{3\hbar} x^{3/2}}}{\sqrt{2\pi} x^{1/4}}\sum_{k=0}^\infty \frac{(6k)!}{1296^k(2k)!(3k)!}\left(\mp\frac{\hbar}{x^{3/2}}\right)^k
    \end{align*}
    providing the wave function
    \begin{align*}
        \psi_{x_0}(x)=Ai_{+\hbar}(x)Ai'_{-\hbar}(x_0)-Ai'_{+\hbar}(x)Ai_{-\hbar}(x_0).
    \end{align*}
    This wave function is annihilated by the quantum curve \cite{Eynard:2017jij}
    \begin{align*}
       \hat{P}_\hbar(\hat{x},\hat{y};\hat{x}_0,\hat{y}_0)= \hat{y}^2-\hat{x}-\hbar\frac{\hat{y}-\hat{y}_0}{\hat{x}-\hat{x}_0}.
    \end{align*}
\end{example}

\subsubsection{Symplectic transformations}

Quantum curves can be derived from certain transformations of already known quantum curves.
Let us apply for later use one of the symplectic transformations discussed in Property \ref{properties:omega} to a general quantum curve $\hat{P}_\hbar(\hat{x}, \hat{y})$. 
\begin{example}\label{ex:symquatrafo}
Consider a spectral curve $(\Sigma=\mathbb{P}^1,x,y,B=\frac{dz_1dz_2}{(z_1-z_2)^2})$ rational $x$ and $y$. Assume the wave function of Definition \ref{def:wavefunction} is annihilated by a quantum curve $\hat{P}_\hbar(\hat{x}, \hat{y})$ 
as stated in Theorem \ref{thm:quantumcurveBE}.

 If we now transform $(x,y)$ of the spectral curve by 
\begin{align}\label{linearsymplectictrafo}
    (x,y)\qquad \mapsto\qquad (x,y+R(x))
\end{align}
with some rational function $R(x)$, then all $\omega_{g,n}$ generated by CEO-TR don't change for $2g+n-2\geq 0$. However, $\omega_{0,1}=ydx$ is changed to $(y+R(x))dx$. Thus, \eqref{linearsymplectictrafo} transforms the wave function $\psi_{x(z_0)}(x(z))$ to 
\begin{align} \label{eq:wavefunction-after-transformation}
    \exp\bigg(\frac{1}{\hbar}\int_{z_0}^z R(x(z'))dx(z')\bigg)\psi_{x(z_0)}(x(z)).
\end{align}
If the quantum curve $\hat{P}_\hbar(\hat{x},\hat{y})$ annihilates the wave function $\psi_{x(z_0)}(x(z))$, then the wave function~\eqref{eq:wavefunction-after-transformation} is annihilated by 
\begin{align}\label{quantumcurvexylinearsympl}
    \hat{P}_{\hbar}\big(\hat{x},\hat{y}-R(\hat{x})\big).
\end{align}

\begin{remark}
	Note that going from $\hat{P}_{\hbar}\big(\hat{x},\hat{y}\big)$ to $\hat{P}_{\hbar}\big(\hat{x},\hat{y}-R(\hat{x})\big)$, it is literally a replacement of $\hat{y}$ by $\hat{y}-R(\hat{x})$ without any normal-ordering due to noncommutativity of $\hat{x}$ and $\hat{y}$.
\end{remark}

\begin{remark}
If the quantum curve keeps track of the base point and thus has a representation of the form $\hat{P}_{\hbar}\big(\hat{x},\hat{y};\hat{x}_0,\hat{y}_0\big)$, then the transformation $(x,y)\mapsto (x,y+R(x))$ would amount the following transformation of the quantum curve:
\begin{align*}
	\hat{P}_{\hbar}\big(\hat{x},\hat{y};\hat{x}_0,\hat{y}_0\big)\quad \mapsto\quad \hat{P}_{\hbar}\big(\hat{x},\hat{y}-R(\hat{x});\hat{x}_0,\hat{y}_0-R(\hat{x}_0)\big).
\end{align*}	
\end{remark}
\end{example}

From here, there are different ways to attempt to generalize Theorem \ref{thm:quantumcurveBE} to higher genus algebraic spectral curves, to curves of higher order ramification, or to curves including exponentials $e^x$ and $e^y$, which are therefore not algebraic anymore. 

\subsubsection{Higher genus}

For higher genus algebraic spectral curves, the construction of the wave function needs to include non-perturbative effects coming from non-trivial cycles on the curve. The actual construction of the full non-perturbative wave function, i.e., the \textit{exact} WKB expansion, is not directly produced by CEO-TR alone. For the Weierstrass curve, Iwaki provided the construction in \cite{Iwaki:2019zeq}, which was further developed in \cite{Marchal:2019bia,Eynard:2023fil,Eynard:2021sxg}. Certain modularity properties must be satisfied by the exact WKB expansion, which are recursively constructed in those articles for algebraic curves with simple ramification points, thus CEO-TR-type.

In this article we want to deal only with genus zero spectral curves, so Definition \ref{def:wavefunction} for the wave function is sufficient for us.  On the other hand, we are interested in the quantum spectral curve in combination with the different generalizations of TR mentioned in Sec. \ref{Sec:ExtTR}.	

\subsubsection{Non-algebraic curves and Log-TR}

In the generalization to non-algebraic curves including exponentials $e^x$ and $e^y$, it was shown in some case studies that CEO-TR can also quantize those curves related, for instance, to the B-model mirror curve for non-compact toric Calabi-Yau threefolds in Gromov-Witten theory, and/or simple $A$-polynomials \cite{Gukov:2011qp}. Several further examples have been proven in the past related to Hurwitz theory and further Gromov-Witten invariants \cite{Do:2014gda,Mulase:2012tm,Dunin-Barkowski:2013wca}. However, general statements about those curves are not available and have been constructed only on a case-by-case basis. Worse, it has been observed that CEO-TR for non-algebraic curves (for instance, the topological vertex curve \cite{Zhou:2009ea}) \textit{cannot} provide the correct quantization in general (see \cite{Bouchard:2011ya,Gukov:2011qp}). It was unclear under which circumstances CEO-TR with simple ramification points for $x$ but non-meromorphic $x, y$ (though meromorphic $dx$ and $dy$) provides the correct quantum curve and when it does not.

It was observed in \cite{Hock:2023dno} that by combining (and enforcing) the so-called $x$-$y$ duality of TR (which will be reviewed in the subsequent subsection) with specific non-algebraic spectral curves, it is possible to formulate an extension of TR that might provide the correct quantization of non-algebraic curves. This led in~\cite{Alexandrov:2023tgl-LogTR} to a definition of Log-TR surveyed in Section~\ref{sec:logTR}. 

From the definition of Log-TR, we can now understand why, for certain examples of non-algebraic spectral curves, CEO-TR can indeed provide the correct quantization, and in other cases, why it cannot. The reason is that Log-TR reduces to CEO-TR if there are no Log-TR-vital poles, even if there are logarithmic singularities. This was already discussed in Section \ref{sec:logTR}. In all (non-algebraic) examples quantized by CEO-TR in the literature up to now, this was indeed the case. However, the observation that CEO-TR cannot provide a quantization for any non-algebraic curve \cite{Bouchard:2011ya,Gukov:2011qp} in general arises precisely in situations where a Log-TR-vital pole exists. Thus, Log-TR is the ideal candidate for providing the correct tool to quantize non-algebraic curves, including the $A$-polynomial in knot theory related to the AJ-conjecture (where, however, the non-perturbative corrections have to be included).

\subsubsection{Towards Gen-TR}
Furthermore, one may ask whether Gen-TR also gives rise to a quantum curve in general. We have seen that Gen-TR reduces to CEO-TR, CN-TR, BE-TR, and Log-TR, but does it construct a quantum curve in its full generality? More precisely, what happens if the set $\mathcal{P}$ is \textit{not} chosen to coincide with the set of critical points of $x$? Examples in this direction will follow from the main theorem of this article. We find that Gen-TR can produce indeed a quantum curve even if it does not reduce to one of the other versions of TR, specifically when the set $\mathcal{P}$ is not equal to the set of critical points of $x$.

\subsection{Universal dualities}

The last thing that we need to review before moving to the main part of this article is the theory of $x$-$y$ duality, which was the fundamental motivation behind the definitions of Log-TR and Gen-TR.

\subsubsection{Universal $x$-$y$ duality} A fundamental question in the theory of TR is the effect of the swap of $x$ and $y$ in the input data of the recursion. Up to a sign it is also a symplectic transformation, however it doesn't preserve $\{\omega_{g,n}\}$.  

A universal formula for the action of the swap of $x$ and $y$ on the system  of differentials $\{\omega_{g,n}\}$ of CEO-TR (the $x$-$y$ swap formula) was conjectured in~\cite{Borot:2021thu}, simplified in \cite{Hock:2022pbw}, proved in a special case in \cite{Hock:2022wer} and proved in general in \cite{Alexandrov:2022ydc}. This universal formula will not be used explicitly in the rest of the article (it can be found in any of the op.~cit., see also \cite{Hock:2023qii, Alexandrov:2024tjo}) instead, we focus on the theorems that follow from it. 

\begin{theorem}\cite{Alexandrov:2024tjo}\label{thm:TRdualityStatements}
	Let $dx$ and $dy$ be meromorphic differentials on a compact Riemann surface $\Sigma$. Split the set of special points as $\mathcal{P}\uplus \mathcal{P}^\vee$. Then the systems of differentials $\{\omega_{g,n}\}$ and $\{\omega_{g,n}^\vee\}$ associated by Gen-TR to the input data $(\Sigma,dx,dy,B,\mathcal{P})$ and $(\Sigma,dy,dx,B,\mathcal{P}^\vee)$ are related to each other by the $x$-$y$ swap formula.  
\end{theorem}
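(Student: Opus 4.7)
The plan is to proceed by induction on the Euler characteristic $2g-2+n$, leveraging the fact that the universal kernel $\tilde{\omega}_{g,n+1}$ appearing in Definition~\ref{def:GTR} was engineered precisely to encode the universal $x$-$y$ swap relation. The base case $\omega_{0,2}=B=\omega_{0,2}^\vee$ is immediate and manifestly satisfies the swap formula, which reduces to the identity on $B$ at this level.

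For the inductive step, I would start from the Gen-TR expression \eqref{eq:GTR} for $\omega_{g,n+1}(I,z)$ as a sum of residues at $p\in\mathcal{P}$ of $\bigl(\int^q B(\cdot,z)\bigr)\tilde{\omega}_{g,n+1}(I,q)$. The integrand $\tilde{\omega}_{g,n+1}(I,q)$ is a universal expression built from the lower $\omega_{g',n'}$ (with $2g'+n'-2<2g+n-2$), $dx$, $dy$, and $B$; by construction, rooted in the KP-integrability perspective, it is meromorphic in $q$ with poles only at the special points of the spectral data. By the inductive hypothesis each such $\omega_{g',n'}$ coincides with the $x$-$y$ swap of $\omega^\vee_{g',n'}$, and I would substitute these relations to rewrite the whole integrand purely in terms of $\omega^\vee$'s, $dx$, $dy$, and $B$.

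The key global step is the residue theorem on the compact surface $\Sigma$: the rewritten 1-form in $q$ has its singularities confined to $\mathcal{P}\uplus\mathcal{P}^\vee$, after checking that no spurious residue arises at $q=z$ (which follows from the regularity of $\tilde{\omega}$ at non-special points), so
\begin{align*}
\sum_{p\in\mathcal{P}}\Res_{q\to p}\;+\;\sum_{p\in\mathcal{P}^\vee}\Res_{q\to p}\;=\;0.
\end{align*}
Consequently $\omega_{g,n+1}$, as produced by Gen-TR with $\mathcal{P}$, equals the negative of the residue sum over $\mathcal{P}^\vee$ of the same integrand. The concluding identification is to recognise this dual residue sum, once all lower $\omega$'s have been replaced by $\omega^\vee$'s, as precisely what the universal $x$-$y$ swap formula of \cite{Borot:2021thu, Hock:2022pbw, Alexandrov:2022ydc} produces when applied to the system $\{\omega^\vee_{g,n}\}$ defined by Gen-TR on $(\Sigma,dy,dx,B,\mathcal{P}^\vee)$.

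The main obstacle is precisely this last matching: verifying term-by-term that the sum of residues at $\mathcal{P}^\vee$ of the KP-theoretic kernel $\tilde{\omega}$ coincides with the closed-form swap formula evaluated on $\{\omega^\vee_{g',n'}\}$ at level $(g,n+1)$. This is a universal combinatorial identity between two equivalent encodings of the $x$-$y$ duality, and it is the reason $\tilde{\omega}$ is set up the way it is. Its verification reduces to the universal $x$-$y$ swap identity of \cite{Alexandrov:2022ydc} combined with the global pole-structure analysis underlying Gen-TR in \cite{Alexandrov:2024tjo}, after which the induction closes.
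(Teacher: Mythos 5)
First, note that the paper does not prove Theorem~\ref{thm:TRdualityStatements}: it is imported verbatim from \cite{Alexandrov:2024tjo}, so there is no in-paper argument to compare yours against; I can only assess your sketch against the structure of the construction it relies on. Your overall shape --- induction on $2g-2+n$, globalization of the recursion \eqref{eq:GTR} by the residue theorem on the compact surface, and identification of the $\mathcal{P}^\vee$-localized part with the closed-form $x$-$y$ swap formula --- is the right spirit. But there is a concrete error at the central step. You claim that no residue arises at $q=z$ because $\tilde{\omega}_{g,n+1}$ is regular at non-special points. The pole at $q=z$ does not come from $\tilde{\omega}$; it comes from the kernel factor: since $B(w,z)=\frac{dw\,dz}{(w-z)^2}+O(1)$, one has $\int^q B(\cdot,z)=-\frac{dz}{q-z}+O(1)$, so the integrand has a simple pole at $q=z$ with
\begin{align*}
\Res_{q\to z}\Bigl(\int^{q}B(\cdot,z)\Bigr)\tilde{\omega}_{g,n+1}(I,q)=-\,\tilde{\omega}_{g,n+1}(I,z).
\end{align*}
The correct global identity is therefore $\omega_{g,n+1}(I,z)=\tilde{\omega}_{g,n+1}(I,z)-\sum_{p\in\mathcal{P}^\vee}\Res_{q\to p}(\cdots)$, not $\omega_{g,n+1}=-\sum_{p\in\mathcal{P}^\vee}\Res_{q\to p}(\cdots)$. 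This extra term is not a nuisance to be argued away: it is the linchpin of the duality, since the closed-form swap formula expressing $\omega_{g,n+1}$ through $\{\omega^\vee_{g',n'}\}$ contains precisely such a ``diagonal'' contribution, and dropping it would yield the manifestly false statement that the Gen-TR outputs for $\mathcal{P}$ and for $\mathcal{P}^\vee$ (with the same $x$, $y$) differ only by a sign.

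Two further points. For $g(\Sigma)>0$ the primitive $\int^q B(\cdot,z)$ is multivalued (nontrivial $\mathfrak{B}$-periods), so the ``sum of all residues vanishes'' step requires cutting $\Sigma$ along a fundamental polygon and using the $\mathfrak{A}$-normalization of $B$ to control the boundary terms; this is standard but cannot be skipped. More importantly, the step you label the ``main obstacle'' --- that the residues of the universal integrand at $\mathcal{P}^\vee$, after rewriting all lower $\omega_{g',n'}$ via the inductive hypothesis, reproduce exactly the closed-form swap formula evaluated on $\{\omega^\vee_{g',n'}\}$ at level $(g,n+1)$ --- is not a technical verification that reduces to the already-proved swap identity of \cite{Alexandrov:2022ydc}; it is the actual content of the theorem, namely the compatibility of the universal expression $\tilde{\omega}$ with the swap of its own building blocks. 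Deferring it entirely to the references amounts to citing the theorem you are trying to prove. As it stands the proposal is a plausible outline of the strategy of \cite{Alexandrov:2024tjo}, but the missing $q=z$ residue makes the stated residue identity wrong, and the decisive matching step is not carried out.
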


\begin{remark} The assumptions of Gen-TR can eventually specialize to the assumptions of CEO-TR, CN-TR or Log-TR, or special cases of BE-TR with $y$ either having a simple pole or being regular with $dy$ is non-vanishing at the critical points of $x$ (and the similar special case with the roles of $x$ and $y$ are swapped on the dual side). See \cite{Alexandrov:2022ydc,Alexandrov:2023tgl-LogTR} for the versions of this theorem with more restrictive assumptions.
\end{remark}

Thus $x$-$y$ duality is present in and applicable to the different versions of TR. As mentioned earlier, the $x$-$y$ swap formula was at the roots of the definitions of Log-TR and Gen-TR. 

\begin{remark} 
In fact, the universal $x$-$y$ swap formula can be applied without any assumption of TR for the systems of differentials. 

\begin{definition} We call a system of symmetric meromorphic differential $\{\omega_{g,n}\}$ \emph{admissible} if all of them but $\omega_{0,2}$ are regular on all diagonals, and $\omega_{0,2}$ has a pole of order two on the diagonal with biresidue $1$. 
\end{definition}

We don't assume that this system of differentials is coming from TR. In this setting the universal $x$-$y$ swap formula applied to an admissible system $\{\omega_{g,n}\}$ produces an admissible system $\{\omega^\vee_{g,n}\}$, and its $x$-$y$ swap with the roles of $x$ and $y$ interchanged is the initial system  $\{\omega_{g,n}\}$.
\end{remark}

\subsubsection{Universal symplectic duality}\label{sec:symduality}

The $x$-$y$ duality is a specific symplectic transformation of the formal symplectic form $dx \wedge dy$. Combining it with the basic transformation $(x,y)\to (x,y+R(x))$ discussed in Properties \ref{properties:omega} and Example~\ref{ex:symquatrafo}, we obtain
via the chain of transformations
\begin{align*}
	\begin{matrix}
		(x, y) & \longrightarrow & (y, x) \\
		\rotatebox{-90}{$\dashrightarrow$} &  & \rotatebox{-90}{$\longrightarrow$} \\
		\\
		(x+R(y), y) & \longleftarrow & (y, x+R(y))
	\end{matrix}
\end{align*}
the following one:
\begin{align}\label{symplecticduality}
    (x,y) \quad &\longrightarrow \quad (x+R(y), y).
\end{align}
It is studied in \cite{Bychkov:2022wgw, Alexandrov:2024ajj} under the name \emph{symplectic duality}. Since it is a composition of transformations that can be represented by universal explicit formulas, its action on a system of differentials is also given by an explicit universal formula, see~\cite[Definition 4.1]{Alexandrov:2024ajj}. Note that the function $R(y)$ is allowed to be a series in $\hbar^2$.

\subsubsection{Universal duality for kernels}
For a given system of admissible differentials $\omega_{g,n}$, define the kernel as
\begin{align}\label{kernel}
K(z_1,z_2) = \frac{\exp\bigg(\sum\limits_{(g,n)\neq (0,1)}\frac{\hbar^{2g+n-2}}{n!}\int\limits_{z_2}^{z_1}...\int\limits_{z_2}^{z_1} \omega_{g,n} - \frac{\delta_{(g,n),(0,2)} dx(z_1)dx(z_2)}{(x(z_1)-x(z_2))^2}\bigg)}{x(z_1) - x(z_2)},
\end{align}
where $z_1,z_2$ are in the same fundamental domain. By definition, $K(z_1,z_2)$ is a formal power series expansion in $\hbar$. 

\begin{remark} In principle, one gets a more invariant definition if $K$ is defined to be a bi-half-differential. To this end, the right hand side of  Equation~\eqref{kernel} is typically multiplied by $\sqrt{dx(z_1)}\sqrt{dx(z_2)}$. 
\end{remark}

One can apply the universal $x$-$y$ swap formula to $\{\omega_{g,n}\}$ and obtain a new system of differentials $\{\omega^\vee_{g,n}\}$. Let $K^\vee(z_1,z_2)$ be the kernel associated to  $\{\omega^\vee_{g,n}\}$.

\begin{theorem}[\cite{Alexandrov:2024tjo}]\label{thm:kernels}
We have
   \begin{align}\label{kernelduality2}
        K^\vee(z_1,z_2) = \frac{i}{2\pi \hbar}
        \iint&K(\chi_1,\chi_2) dx(\chi_1) dx(\chi_2) \\\nonumber
        &\times e^{\frac{1}{\hbar}\big(y(z_2) (x(z_2)-x(\chi_1))+\int_{z_2}^{\chi_1} y\, dx\big)} \\\nonumber&\times e^{-\frac{1}{\hbar}\big(y(z_1) (x(z_1)-x(\chi_2))+\int_{z_1}^{\chi_2} y\, dx\big)},
    \end{align}
    where $\int$
    is understood as formal Gaussian integrals for nondegenerate quadratic forms at each order in $\sqrt{\hbar}$. The dual transformation reconstructs $K(z_1,z_2)$ from $K^\vee(z_1,z_2)$ after interchanging $x$ and $y$
     \begin{align}\label{kernelduality}
        K(z_1,z_2) = \frac{i}{2\pi \hbar}
        \iint&K^\vee(\chi_1,\chi_2) dy(\chi_1) dy(\chi_2) \\\nonumber
        &\times e^{\frac{1}{\hbar}\big(x(z_2) (y(z_2)-y(\chi_1))+\int_{z_2}^{\chi_1} x\, dy\big)} \\\nonumber
        &\times e^{-\frac{1}{\hbar}\big(x(z_1) (y(z_1)-y(\chi_2))+\int_{z_1}^{\chi_2} x\, dy\big)}.
    \end{align}
\end{theorem}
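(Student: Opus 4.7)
\medskip

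The plan is to derive Equation~\eqref{kernelduality2} from the universal $x$-$y$ swap formula for differentials (Theorem~\ref{thm:TRdualityStatements}) by exponentiation and resummation. The point is that $K(z_1,z_2)$ is a generating function for iterated integrals of $\omega_{g,n}$ from $z_2$ to $z_1$, so a swap formula expressed as a sum over decorated graphs at the level of differentials should translate, upon summing over insertions, into a formal Gaussian integral representation at the level of kernels. First I would check that both sides are well-defined as formal power series in $\hbar$: the left-hand side by definition, and the right-hand side because the double integral is to be understood as a formal stationary-phase expansion.

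The next step is to perform the saddle-point analysis for the double integral in~\eqref{kernelduality2}. Differentiating the $\chi_1$-phase gives $(y(\chi_1) - y(z_2))\, x'(\chi_1) = 0$ and differentiating the $\chi_2$-phase gives $(y(\chi_2) - y(z_1))\, x'(\chi_2) = 0$, so the relevant critical point is $(\chi_1,\chi_2) = (z_2, z_1)$. At this saddle the classical exponents cancel, reproducing the Legendre transform between $\int y\,dx$ and $\int x\,dy$; the Hessian equals $x'(z_i)\,y'(z_i)$ at each point, and together with the $(2\pi\hbar)^{-1}$ prefactor the Gaussian determinant produces the $1/(y(z_1)-y(z_2))$ prefactor that is hidden inside $K^\vee$ (after one strips the exponential of $\omega^\vee_{0,2}$ from its definition~\eqref{kernel}). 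This matches the leading, classical content of the $x$-$y$ swap.

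The heart of the argument is to show that the full perturbative expansion of the double Gaussian integral agrees with $K^\vee$. Expanding $K(\chi_1,\chi_2)$ as a formal series in $\hbar$ around $(z_2,z_1)$ and applying Wick's theorem with the quadratic form read off from the Hessian produces a sum over Feynman diagrams whose vertices are derivatives of the integrated $\omega_{g,n}$'s and whose edges are propagators built out of $x'y'$. The key combinatorial claim is that this Wick sum coincides, after exponentiation, with the sum-over-graphs presentation of the universal $x$-$y$ swap formula of \cite{Hock:2022pbw, Alexandrov:2022ydc}, interpreted as acting on the kernel rather than on each $\omega_{g,n}^\vee$ individually. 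This is the main obstacle: one must match combinatorial factors, signs, and symmetry factors between the two graphical expansions, and check that the extra pieces $y(z_i)(x(z_i)-x(\chi_j))$ in the exponent contribute exactly the cross-terms needed to relate the diagonal-free integrated $\omega_{0,2}$ in $K$ to the diagonal-free integrated $\omega^\vee_{0,2}$ in $K^\vee$.

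Finally, Equation~\eqref{kernelduality} follows immediately from Equation~\eqref{kernelduality2} by interchanging the roles of $x$ and $y$ and using the fact that the universal $x$-$y$ swap is an involution on admissible systems (as noted in the remark preceding Section~\ref{sec:symduality}). Because the formal Gaussian integral is itself symmetric under this interchange (it merely swaps $dx \leftrightarrow dy$ and flips the sign coming from $(x,y) \mapsto (-y,x)$), no new work is required beyond checking the sign conventions.
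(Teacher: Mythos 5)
This theorem is not proved in the paper at all: it is imported verbatim from \cite{Alexandrov:2024tjo} (the bracketed citation in the theorem header is the ``proof''), and the only original content here is the remark immediately following it, which adjusts the sign convention $\omega_{0,1}=y\,dx$ versus $-y\,dx$ by replacing $\hbar$ with $-\hbar$. So there is no in-paper argument to compare yours against; the relevant comparison would be with the derivation in \emph{op.~cit.}

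Judged on its own terms, your proposal is a reasonable strategy outline but not a proof, and the gap is exactly where you flag it. The saddle-point bookkeeping in your second paragraph is essentially right: the critical point is $(\chi_1,\chi_2)=(z_2,z_1)$, the phases $\frac{1}{\hbar}\int_{z_2}^{\chi_1}(y-y(z_2))\,dx$ and its partner vanish there and contribute the Hessians $\pm x'y'$, and (since $K$ by definition~\eqref{kernel} \emph{excludes} the $(0,1)$ term, so no Legendre boundary terms $xy$ need to be produced) the Gaussian determinant together with the $\omega_{0,2}$-part of $K$ and the prefactor $\frac{i}{2\pi\hbar}$ reproduces the leading term $\frac{1}{(z_1-z_2)\sqrt{y'(z_1)y'(z_2)}}$ of $K^\vee$. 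But that only verifies the theorem at order $\hbar^0$. The entire content of the statement is the claim in your third paragraph --- that the full Wick expansion of the formal Gaussian integral, with vertices given by derivatives of the integrated $\omega_{g,n}$ and propagators from the Hessian, resums to the exponential of the integrated $\omega^\vee_{g,n}$ as produced by the universal $x$-$y$ swap formula of \cite{Hock:2022pbw,Alexandrov:2022ydc} --- and you assert it as ``the key combinatorial claim'' while simultaneously calling it ``the main obstacle.'' Naming the obstacle is not overcoming it: the matching of symmetry factors, the role of the subtracted diagonal in the $(0,2)$ terms on both sides, and the connected-versus-disconnected bookkeeping when one exponentiates a swap formula stated for individual differentials are precisely what a proof must supply. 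One small correction to your last paragraph: the passage from~\eqref{kernelduality2} to~\eqref{kernelduality} is indeed just the involutivity of the swap applied to the admissible system $\{\omega^\vee_{g,n}\}$, but it is a plain interchange $x\leftrightarrow y$ (and $dx\leftrightarrow dy$) with no sign flip --- the transformation $(x,y)\mapsto(-y,x)$ you invoke is a different symplectic transformation and does not enter here.
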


\begin{remark} Note the different sign conventions compared to \emph{op.~cit.}. We use $\omega_{0,1}=ydx$ (as opposed to $\omega_{0,1}=-ydx$), so we have to replace $\hbar$ with $-\hbar$.
\end{remark}

\begin{remark}
The kernel $K$, as defined in \eqref{kernel}, is related to the wave function of Definition \ref{def:wavefunction} via:
\begin{align}
	e^{-\frac{1}{\hbar}\int_{z_0}^z\omega_{0,1}} \frac{\psi_{x(z_0)}(x(z))}{x(z)-x(z_0)} = K(z,z_0),
\end{align}
where $\omega_{0,1} = y\,dx$.	
\end{remark}

\section{Quantum curve, wave function, and dualities}

\label{sec:QuantumCurve-dualities}

This section aims to combine the ideas of constructing the wave function and the corresponding quantum curve from the different versions of TR, focusing on the $x$-$y$ and symplectic dualities. The first computations attempting to combine the concepts of the quantum curve and the $x$-$y$ duality were carried out in \cite{Hock:2023dno}. This idea was further studied in \cite{Weller:2024msm} for spectral curves with singular base points for $x$ and $y$. More precisely, the spectral curves studied in \cite{Weller:2024msm} require a point $z_0$ such that $dx$ and $dy$ are singular at this point. In this setting, the two perturbative wave functions $\psi_{x(z_0)}(x(z))$ and $\psi_{y(z_0)}^\vee(y(z))$ are related via formal Laplace transform. 

General spectral curves do not need to satisfy the property that a point $z_0$ exists where both $dx$ and $dy$ are singular. One may ask whether it is possible to relate the two perturbative wave functions $\psi_{x(z_0)}(x(z))$ and $\psi^\vee_{y(\tilde{z}_0)}(y(z))$ with different base points $z_0$ and $\tilde{z}_0$. The simple answer to this question is yes, but the result is somewhat unsatisfactory, as it is essentially a reformulation of Theorem~\ref{thm:kernels}.

\begin{proposition}\label{prop:wavefunctionduality}
Let  $\{\omega_{g,n}\}$ be an admissible system of differentials and let $\{\omega_{g,n}^\vee\}$ be its universal $x$-$y$ swap. 
Let the perturbative wave function $\psi_{x(z_0)}(x(z))$ (resp., $\psi^\vee_{y(\tilde{z}_0)}(y(z))$)  be defined from $\{\omega_{g,n}\}$ (resp., $\{\omega_{g,n}^\vee\}$) with $\omega_{0,1} = y\,dx$ (resp., $\omega_{0,1}^\vee = x\,dy$). Then, the following relation holds:
\begin{align}\label{ExtLaplace}
    \frac{\psi_{x(z_0)}(x(z))}{x(z)-x(z_0)} = \frac{i}{2\pi\hbar} \iint \frac{dy(\chi)dy(\chi_0)}{y(\chi)-y(\chi_0)} e^{\frac{x(z)y(\chi_0)-x(z_0)y(\chi)}{\hbar}} \psi^\vee_{y(\chi_0)}(y(\chi)).
\end{align}
\end{proposition}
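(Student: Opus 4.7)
The plan is to derive~\eqref{ExtLaplace} directly from the kernel duality of Theorem~\ref{thm:kernels} by translating both kernels into their corresponding wave functions. Specifically, I would use the relation
\begin{align*}
K(z,z_0) \;=\; e^{-\frac{1}{\hbar}\int_{z_0}^{z} y\,dx}\cdot\frac{\psi_{x(z_0)}(x(z))}{x(z)-x(z_0)}
\end{align*}
stated just before Section~\ref{sec:QuantumCurve-dualities}, together with its dual analogue
\begin{align*}
K^\vee(\chi_1,\chi_2) \;=\; e^{-\frac{1}{\hbar}\int_{\chi_2}^{\chi_1} x\,dy}\cdot\frac{\psi^\vee_{y(\chi_2)}(y(\chi_1))}{y(\chi_1)-y(\chi_2)}.
\end{align*}

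First I would set $z_1=z$, $z_2=z_0$ in equation~\eqref{kernelduality} of Theorem~\ref{thm:kernels} and substitute both expressions above. After this replacement, the statement reduces to checking a pointwise identity of exponential prefactors: the $y\,dx$-primitive produced on the left, the three $x\,dy$-primitives (one from $K^\vee$ and two from~\eqref{kernelduality}), and the algebraic pieces $x(z_0)(y(z_0)-y(\chi_1))$ and $-x(z)(y(z)-y(\chi_2))$ already present in~\eqref{kernelduality}, must combine to yield the exponent $\frac{1}{\hbar}\bigl(x(z)y(\chi_0)-x(z_0)y(\chi)\bigr)$ of~\eqref{ExtLaplace}.

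The second step is to collapse this exponent. Additivity of integration along paths immediately consolidates the three $x\,dy$-integrals into $\int_{z_0}^{z} x\,dy$, and a single integration by parts rewrites this as $x(z)y(z)-x(z_0)y(z_0)-\int_{z_0}^{z} y\,dx$. The $y\,dx$-term cancels the one coming from the left-hand kernel, while the remaining boundary terms telescope with the algebraic pieces to leave exactly $x(z)y(\chi_2)-x(z_0)y(\chi_1)$. Renaming $\chi_1\to\chi$ and $\chi_2\to\chi_0$ then reproduces~\eqref{ExtLaplace}. The only expected obstacle is bookkeeping: tracking signs and making consistent choices of integration paths within a common fundamental domain so that every primitive is well-defined. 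This is purely combinatorial, and the formal Gaussian nature of the double integral (already fixed by Theorem~\ref{thm:kernels}) guarantees that the manipulation makes sense order by order in $\sqrt{\hbar}$.
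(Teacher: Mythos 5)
Your proposal is correct and follows essentially the same route as the paper's own proof: substitute the kernel--wave-function relations into \eqref{kernelduality}, move the $e^{-\frac{1}{\hbar}\int_{z_0}^{z} y\,dx}$ prefactor to the other side, consolidate the three $x\,dy$-primitives into $\int_{z_0}^{z} x\,dy$, and integrate by parts so that the exponent collapses to $\frac{1}{\hbar}\bigl(x(z)y(\chi_0)-x(z_0)y(\chi)\bigr)$. Your explicit bookkeeping of the exponent and your closing remark on the formal Gaussian/order-by-order interpretation match the paper's (more tersely stated) argument.
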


\begin{proof}
Recall Equation~\eqref{kernelduality}. Replace the kernels by the wave functions using
\begin{align*}
    K(z,z_0) &= e^{-\frac{1}{\hbar}\int_{z_0}^z y\,dx} \frac{\psi_{x(z_0)}(x(z))}{x(z)-x(z_0)}, \\
    K^\vee(\chi,\chi_0) &= e^{-\frac{1}{\hbar}\int_{\chi_0}^\chi x\,dy} \frac{\psi^\vee_{y(\chi_0)}(y(\chi))}{y(\chi)-y(\chi_0)}.
\end{align*}
The duality statement of the kernels $K$ and $K^\vee$ is a statement order by order in an $\hbar$-expansion. We can multiply, without any harm, by functions of the form $e^{\frac{f(z,z_0)}{\hbar}}$ such that the statement remains valid. The Gaussian integrals should still be understood as formal Gaussian integrals, even though some formal cancellations may appear.
Accepting this formal manipulation of the duality formula \eqref{kernelduality2}, and multiplying it by $e^{\frac{1}{\hbar}\int_{z_0}^z y\,dx}$, we are left with:
\begin{align*}
    \frac{\psi_{x(z_0)}(x(z))}{x(z)-x(z_0)} 
    =\frac{i}{2\pi\hbar} \iint & \frac{dy(\chi)dy(\chi_0)}{y(\chi)-y(\chi_0)} 
    e^{\frac{1}{\hbar}\int_{z_0}^z y\,dx} e^{-\frac{1}{\hbar}\int_{\chi_0}^\chi x\,dy} \psi^\vee_{y(\chi_0)}(y(\chi)) \\
    &\times e^{-\frac{1}{\hbar}\big(x(z)(y(z)-y(\chi_0))+\int_{z}^{\chi_0} x\,dy\big)} \\
     &\times e^{\frac{1}{\hbar}\big(x(z_0)(y(z_0)-y(\chi))+\int_{z_0}^{\chi} x\,dy\big)} .
\end{align*}
The argument of the exponent under the sign of integral is reduced by the one in~\eqref{ExtLaplace} by partial integration. 
\end{proof}

\begin{remark} In Equation~\eqref{ExtLaplace}, the Laplace kernel has the form $e^{\frac{1}{\hbar}x(z)y(\chi_0)}$. Thus, it relates the base point of $\psi^\vee$ with the actual variable of $\psi$. Note, however, that by the definition of the wave function $\psi_{x_0}(x)$ with some formal parameter $\hbar$, changing the sign of $\hbar$ amounts to swapping $x_0$ and $x$: 
		$\psi^\hbar_{x_0}(x) = \psi^{-\hbar}_{x}(x_0)$.
	Thus, changing the sign of $\hbar$ for the dual wave function  is a more natural choice. We retain the definitions as given earlier.
\end{remark}

If the multi-differentials $\omega_{g,n}$ originate from one of the various versions of topological recursion, we conclude the following corollary:

\begin{corollary}
    If $\omega_{g,n}$ is generated by CEO-TR, CN-TR, BE-TR, or Log-TR for the spectral curve $(\Sigma,dx,dy,B)$, or by Gen-TR for the spectral curve $(\Sigma,dx,dy,B,\mathcal{P})$ under the conditions specified in Theorem \ref{thm:TRdualityStatements}, and $\omega_{g,n}^\vee$ is generated by the dual spectral curve $(\Sigma,dy,dx,B)$ or $(\Sigma,dy,dx,B,\mathcal{P}^\vee)$, respectively, then the two perturbative wave functions are related via \eqref{ExtLaplace}.
\end{corollary}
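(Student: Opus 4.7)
The Corollary is an immediate application of Proposition~\ref{prop:wavefunctionduality} once the hypotheses are verified for the systems of differentials arising from the listed versions of topological recursion. The plan is therefore in three steps, none of which requires new computation.

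First, I would check admissibility: in each of CEO-TR, CN-TR, BE-TR, Log-TR, and Gen-TR, the produced system $\{\omega_{g,n}\}$ is symmetric (Properties~\ref{properties:omega}, and the corresponding statements for Log-TR and Gen-TR discussed around Definitions~\ref{def:claTR}--\ref{def:GTR}); $\omega_{0,2}=B$ has a double pole on the diagonal with biresidue $1$; and all other $\omega_{g,n}$ have poles only at prescribed special points (ramification points, Log-TR-vital points, or points of $\mathcal{P}$), none of which lie on the diagonals. Hence the system is admissible in the sense of the definition preceding Theorem~\ref{thm:kernels}, and the same is true for the dual system $\{\omega_{g,n}^\vee\}$.

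Second, I would invoke Theorem~\ref{thm:TRdualityStatements}: under the stated conditions, the dual system $\{\omega_{g,n}^\vee\}$ produced by the dual TR on $(\Sigma,dy,dx,B)$ (or $(\Sigma,dy,dx,B,\mathcal{P}^\vee)$ in the Gen-TR case) is precisely the $x$-$y$ swap of $\{\omega_{g,n}\}$. This is the only nontrivial input, and it is exactly the assertion of Theorem~\ref{thm:TRdualityStatements} including the remark on the compatibility with special cases of CEO/CN/BE/Log-TR.

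Third, with both systems identified as $x$-$y$ duals of one another, I would apply Proposition~\ref{prop:wavefunctionduality} directly. By construction of the perturbative wave functions in Definition~\ref{def:wavefunction}, the wave functions $\psi_{x(z_0)}(x(z))$ and $\psi^\vee_{y(\tilde{z}_0)}(y(z))$ associated respectively to $\{\omega_{g,n}\}$ and $\{\omega_{g,n}^\vee\}$ satisfy formula~\eqref{ExtLaplace}, with $\tilde z_0=\chi_0$ playing the role of the (integrated-out) dual base point.

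The only conceivable obstacle is making sure that each generalized version of TR is covered by Theorem~\ref{thm:TRdualityStatements} in the precise form needed. For CEO-TR, CN-TR, Log-TR, and Gen-TR this is explicit; for BE-TR one must restrict to the cases in which $\{\omega_{g,n}\}$ really equals the Gen-TR output with $\mathcal{P}=\mathrm{Ram}(x)$, as flagged in the remark after Theorem~\ref{thm:TRdualityStatements}. Since the statement of the Corollary implicitly assumes exactly this compatibility, no additional argument is required, and the proof reduces to the two-line citation of Theorem~\ref{thm:TRdualityStatements} and Proposition~\ref{prop:wavefunctionduality}.
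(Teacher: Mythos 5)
Your proposal is correct and follows exactly the route the paper intends: the corollary is stated without proof precisely because it is the immediate combination of Theorem~\ref{thm:TRdualityStatements} (identifying the dual TR output with the $x$-$y$ swap of $\{\omega_{g,n}\}$) and Proposition~\ref{prop:wavefunctionduality}. Your extra care about admissibility and about the restricted BE-TR cases matches the caveats already flagged in the remark following Theorem~\ref{thm:TRdualityStatements}, so nothing is missing.
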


\begin{remark}
    The extended Laplace transform \eqref{ExtLaplace} of Proposition \ref{prop:wavefunctionduality}, when inserted into its dual version, formally yields the identity.
\end{remark}

\subsection{\texorpdfstring{$x$-$y$}{x-y} duality transformation for the quantum curve}\label{sec:class}
Proposition \ref{prop:wavefunctionduality} indicates that the base point is indeed important from several perspectives. If the wave functions $\psi$ and $\psi^\vee$ are derived, for instance, from CEO-TR, the theorem by Bouchard-Eynard (Theorem \ref{thm:quantumcurveBE}) guarantees the existence of a quantum curve $\hat{P}_{\hbar}(\hat{x},\hat{y})$ that annihilates the wave function. Thus, it becomes interesting to study the action of $\hat{x}$ and $\hat{y}$ on the duality formula of the wave functions \eqref{ExtLaplace}. This provides a way to connect the quantum curve $\hat{P}_\hbar(\hat{x},\hat{y})$ to its dual $\hat{P}^\vee_\hbar(\hat{x}^\vee,\hat{y}^\vee)$. 

This approach can be effectively used if one of the families $\omega_{g,n}$ or $\omega_{g,n}^{\vee}$ is trivial, enabling us to derive the quantum curve of the other family. To achieve this, let us compute the action of derivatives with respect to $x(z)$ and multiplication by $x(z)$ on \eqref{ExtLaplace}. The following proposition derives, for the action of $\hat{x}$ and $\hat{y}$ on $\psi_{x_0}(x)$, the corresponding dual action on $\psi^\vee_{y_0}(y)$. The operators, including those for the base points $x_0$ and $y_0$, are defined as:
\begin{align*}
    &\hat{x}f(x,x_0) = x \cdot f(x,x_0), 
    &&\hat{y} f(x,x_0) = \hbar \frac{d}{dx} f(x,x_0), \\
    &\hat{x}_0f(x,x_0) = x_0 \cdot f(x,x_0), 
    &&\hat{y}_0 f(x,x_0) = -\hbar \frac{d}{dx_0} f(x,x_0),
\end{align*}
and, for the dual side, we have 

\begin{align*}
	&\hat{x}^\vee f(x^\vee,x^\vee_0) = x^\vee \cdot f(x^\vee,x^\vee_0), 
	&&\hat y^\vee f(x^\vee,x^\vee_0) = \hbar \frac{d}{dx^\vee} f(x^\vee,x^\vee_0), \\
	&\hat x^\vee_0f(x^\vee,x^\vee_0) = x^\vee_0 \cdot f(x^\vee,x^\vee_0), 
	&&\hat y^\vee_0 f(x^\vee,x^\vee_0) = -\hbar \frac{d}{dx^\vee_0} f(x^\vee,x^\vee_0),
\end{align*}

\begin{proposition} \label{prop:ActionOnOperators}
    In application to the two wave functions related by \eqref{ExtLaplace}, the operators $\hat{x},\hat{x}_0,\hat{y},\hat{y}_0$ transform to expressions in terms of the dual operators $\hat{x}^\vee,\hat{x}_0^\vee,\hat{y}^\vee,\hat{y}_0^\vee$ as follows:
    \begin{align*}
        &\hat{x} \quad \mapsto \quad \hat{y}_0^\vee - \frac{\hbar}{\hat{x}^\vee - \hat{x}_0^\vee},\qquad 
        &&\hat{y} \quad \mapsto \quad \hat{x}_0^\vee - \frac{\hbar}{\hat{y}^\vee - \hat{y}_0^\vee},\\
        &\hat{x}_0 \quad \mapsto \quad \hat{y}^\vee - \frac{\hbar}{\hat{x}^\vee - \hat{x}_0^\vee},\qquad 
        &&\hat{y}_0 \quad \mapsto \quad \hat{x}^\vee - \frac{\hbar}{\hat{y}^\vee - \hat{y}_0^\vee}.
    \end{align*}
\end{proposition}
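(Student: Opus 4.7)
The plan is to rewrite Equation~\eqref{ExtLaplace} compactly as
$$\psi_{x_0}(x) = \frac{i}{2\pi\hbar}\iint dx^\vee\,dx_0^\vee\;K(x,x_0;x^\vee,x_0^\vee)\,\psi^\vee_{x_0^\vee}(x^\vee),$$
with kernel $K := (x-x_0)\,e^{(xx_0^\vee - x_0 x^\vee)/\hbar}/(x^\vee - x_0^\vee)$ (here $x^\vee, x_0^\vee$ abbreviate $y(\chi), y(\chi_0)$). For each operator $\hat A \in \{\hat x, \hat x_0, \hat y, \hat y_0\}$ the game is to compute $\hat A K$, express the result as $L^\vee K$ for some differential operator $L^\vee$ acting in the dual variables $(x^\vee, x_0^\vee)$, and then use formal integration by parts inside the Gaussian integral to transfer $L^\vee$ onto $\psi^\vee$. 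Under IBP, multiplication operators are unchanged while derivatives pick up a sign, so that the transfer of $L^\vee$ (with this sign flip on derivative factors) yields precisely the $\hat A^\vee$ appearing in the proposition.

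The derivation rests on two ingredients. First, the four basic identities
$$\hbar\partial_x e^{\cdots} = x_0^\vee\,e^{\cdots},\quad -\hbar\partial_{x_0} e^{\cdots} = x^\vee e^{\cdots},\quad \hbar\partial_{x_0^\vee} e^{\cdots} = x\,e^{\cdots},\quad -\hbar\partial_{x^\vee} e^{\cdots} = x_0\,e^{\cdots},$$
relating differentiation and multiplication on the exponential factor. Second, the non-trivial identity
$$(\hat y^\vee - \hat y_0^\vee)\,K = (x - x_0)\,K,$$
which follows from a direct product-rule computation using that $(\partial_{x^\vee} + \partial_{x_0^\vee})[1/(x^\vee - x_0^\vee)] = 0$. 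This last identity gives meaning to the formal inverse $\hbar/(\hat y^\vee - \hat y_0^\vee)$ appearing in the proposition: its action on $K$ is precisely $\hbar K/(x - x_0)$.

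The four transformation formulas then follow by routine product-rule computations. For $\hat x$, the identity $x\,e^{\cdots} = \hbar\partial_{x_0^\vee}e^{\cdots}$ together with the product rule on the $1/(x^\vee - x_0^\vee)$ factor gives $xK = \hbar\partial_{x_0^\vee}K - \hbar K/(x^\vee - x_0^\vee)$; IBP in $x_0^\vee$ then turns this into $\big(\hat y_0^\vee - \hbar/(\hat x^\vee - \hat x_0^\vee)\big)\psi^\vee$ on the dual side. The case $\hat x_0$ is parallel, using $x_0\,e^{\cdots} = -\hbar\partial_{x^\vee}e^{\cdots}$. For $\hat y$, differentiating $K$ with respect to $x$ hits both the $(x-x_0)$ prefactor and the exponential, producing $\hbar\partial_x K = x_0^\vee K + \hbar K/(x-x_0)$; the second term equals $\hbar(\hat y^\vee - \hat y_0^\vee)^{-1}K$ by the non-trivial identity, and IBP (which flips the sign of the derivative operator $\hat y^\vee - \hat y_0^\vee$, hence of its inverse) yields $\big(\hat x_0^\vee - \hbar/(\hat y^\vee - \hat y_0^\vee)\big)\psi^\vee$. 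The computation for $\hat y_0$ is identical, swapping the roles of $x, x^\vee$ with $x_0, x_0^\vee$ where appropriate.

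The main obstacle is the precise interpretation of the formal inverse operator $\hbar/(\hat y^\vee - \hat y_0^\vee)$ and the justification of integration by parts inside the formal Gaussian integral. Both have to be understood order by order in $\hbar$ in the sense of Theorem~\ref{thm:kernels}: the boundary terms from IBP vanish formally at each order in the $\hbar$-expansion, and the inverse operator is defined unambiguously on $K$ by the identity $(\hat y^\vee - \hat y_0^\vee)K = (x-x_0)K$. Once these formal manipulations are accepted, the derivation is purely computational.
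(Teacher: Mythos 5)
Your proposal is correct and follows essentially the same route as the paper: both rest on the exponential kernel intertwining multiplication and differentiation in the dual variables, product-rule corrections from the rational prefactors $1/(x-x_0)$ and $1/(x^\vee-x_0^\vee)$, formal integration by parts, and the identification of $x-x_0$ with $\hat y_0^\vee-\hat y^\vee$ to make sense of the formal inverse $\hbar/(\hat y^\vee-\hat y_0^\vee)$. Your explicit kernel identity $(\hat y^\vee-\hat y_0^\vee)K=(x-x_0)K$ is just a tidy repackaging of the step where the paper combines the correspondences for $x$ and $x_0$ to move the $\hbar/(x-x_0)$ term to the dual side.
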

\begin{proof} Differentiate~\eqref{ExtLaplace} with respect to $\hbar \frac{d}{dx(z)}$. On the one hand, we have 
\begin{align*}
	&\hbar \frac{d}{dx(z)} \frac{\psi_{x(z_0)}(x(z))}{x(z)-x(z_0)} = \frac{\big(\hbar \frac{d}{dx} - \frac{\hbar}{x-x_0}\big)\psi_{x(z_0)}(x(z))}{x(z)-x(z_0)} 
\end{align*}
On the other hand, 
\begin{align*}
   & \hbar \frac{d}{dx(z)} \frac{\psi_{x(z_0)}(x(z))}{x(z)-x(z_0)} \\
   & = \frac{i}{2\pi\hbar}\iint \hbar \frac{d}{dx(z)}\frac{\,dy(\chi)\,dy(\chi_0)}{y(\chi)-y(\chi_0)} e^{\frac{x(z)y(\chi_0)-x(z_0)y(\chi)}{\hbar}} \psi^\vee_{y(\chi_0)}(y(\chi))\\
   & = \frac{i}{2\pi\hbar}\iint \hbar \frac{d}{dx(z)}\frac{\,dy(\chi)\,dy(\chi_0)}{y(\chi)-y(\chi_0)} e^{\frac{x(z)y(\chi_0)-x(z_0)y(\chi)}{\hbar}} y(\chi_0) \psi^\vee_{y(\chi_0)}(y(\chi)).
\end{align*}
Thus, we find:
\begin{align}\label{dxy}
    \hbar \frac{d}{dx} - \frac{\hbar}{x-x_0} \mapsto y_0 = \hat x_0^\vee.
\end{align}
Similarly, 
\begin{align}\label{xdy}
    x \mapsto - \hbar \frac{d}{dy_0} - \frac{\hbar}{y-y_0} = \hat y^\vee_0 - \frac{\hbar}{\hat x^\vee - \hat x^\vee_0}.
\end{align}
For the $x \leftrightarrow x_0$ case, we have
\begin{align}\label{dx0y}
    \hbar \frac{d}{dx_0} + \frac{\hbar}{x-x_0} \mapsto -y = -\hat x^\vee
\end{align}
and
\begin{align}\label{x0dy}
    x_0 \mapsto \hbar \frac{d}{dy} - \frac{\hbar}{y-y_0} = \hat y^\vee - \frac{\hbar}{\hat x^\vee-\hat x^\vee_0}.
\end{align}
Combining \eqref{x0dy} with \eqref{xdy}, we can map the term $\frac{\hbar}{x-x_0}$ in \eqref{dxy} to the other side as well, and we obtain:
\begin{align}\label{dxytofully}
    \hat y = \hbar \frac{d}{dx} \mapsto y_0 - \frac{1}{\frac{d}{dy} + \frac{d}{dy_0}} = \hat x_0^\vee - \frac{\hbar}{\hat y^\vee -\hat y^\vee_0}.
\end{align}
The other equations are proved analogously. 
\end{proof}

\begin{remark} Note that applying the transformations of Proposition~\ref{prop:ActionOnOperators} twice recovers the initial operators. Thus, this action is an involution. 
\end{remark}

\begin{corollary}\label{cor:dualquantumcurve}
    Let us assume $\hat{P}_\hbar(\hat{x},\hat{y};\hat{x}_0,\hat{y}_0)$ is the quantum curve that annihilates the wave function $\psi_{x(z_0)}(x(z))$ generated by the system  of differentials $\{\omega_{g,n}\}$. Then the dual quantum curve $\hat{P}^\vee_\hbar(\hat{x}^\vee,\hat{y}^\vee;\hat{x}_0^\vee,\hat{y}_0^\vee)$ annihilating the dual wave function $\psi^\vee_{x^\vee(z_0)}(x^\vee(z))$ generated by the dual system $\{\omega_{g,n}^\vee\}$ is given by:
    \begin{align}
        & \hat{P}^\vee_\hbar(\hat{x}^\vee,\hat{y}^\vee;\hat{x}_0^\vee,\hat{y}_0^\vee)
        \\
        \notag 
        & = \hat{P}_\hbar\bigg(\hat{y}_0^\vee - \frac{\hbar}{\hat{x}^\vee - \hat{x}_0^\vee},
         \hat{x}_0^\vee - \frac{\hbar}{\hat{y}^\vee - \hat{y}_0^\vee};
        \hat{y}^\vee - \frac{\hbar}{\hat{x}^\vee - \hat{x}_0^\vee},
        \hat{x}^\vee - \frac{\hbar}{\hat{y}^\vee - \hat{y}_0^\vee}\bigg),
    \end{align}
    where the right-hand side is literally understood as substitution (no normal ordering is applied).
\end{corollary}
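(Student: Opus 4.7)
The plan is to elevate Proposition~\ref{prop:ActionOnOperators} to a clean operator-theoretic identity and then extend it multiplicatively. Let $\mathcal{L}$ denote the integral transform implicit in~\eqref{ExtLaplace}, incorporating the factor of $(x-x_0)$ so that $\psi = \mathcal{L}[\psi^\vee]$ directly. The first step is to recast Proposition~\ref{prop:ActionOnOperators} as the operator identity
\begin{align*}
\hat O \circ \mathcal{L} = \mathcal{L} \circ \hat O^\vee
\end{align*}
for each basic operator $\hat O \in \{\hat x, \hat y, \hat x_0, \hat y_0\}$, where $\hat O^\vee$ is the substitute given in the corollary. This is established by repeating the integration-by-parts computations from the proof of Proposition~\ref{prop:ActionOnOperators} with an arbitrary formal integrand $F$ in place of $\psi^\vee$; the rational correction terms $-\hbar/(\hat x^\vee - \hat x_0^\vee)$ and $-\hbar/(\hat y^\vee - \hat y_0^\vee)$ emerge precisely from absorbing the commutator of the basic operators with the multiplicative factor $(x-x_0)$, respectively $(u-u_0)$, into the dual side, as is already visible in the passage from~\eqref{dxy} to~\eqref{dxytofully}.

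The second step, which is the technical heart of the argument, is to upgrade this to any noncommutative polynomial. Iterating the operator identity gives
\begin{align*}
\hat O_{i_1} \cdots \hat O_{i_k} \circ \mathcal{L} = \mathcal{L} \circ \hat O_{i_1}^\vee \cdots \hat O_{i_k}^\vee,
\end{align*}
with the dual operators composed in the same order, and by linearity in the coefficients one obtains $\hat P_\hbar \circ \mathcal{L} = \mathcal{L} \circ \hat P_\hbar^\vee$ for $\hat P_\hbar^\vee$ defined by the \emph{literal} substitutions in the corollary. This is exactly why ``no normal ordering is applied'': the correspondence is compatible with composition, so the order of the basic operators in $\hat P_\hbar$ is inherited verbatim by their substitutes.

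To conclude, applying this identity to $\psi^\vee$ and using the hypothesis $\hat P_\hbar \psi = \hat P_\hbar \mathcal{L}[\psi^\vee] = 0$, we obtain $\mathcal{L}[\hat P_\hbar^\vee \psi^\vee] = 0$. The invertibility of $\mathcal{L}$ as a formal power series in $\hbar$ -- a direct consequence of the two-sided nature of Theorem~\ref{thm:kernels} -- then yields $\hat P_\hbar^\vee \psi^\vee = 0$, which is the desired annihilation.

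The principal obstacle is the consistent treatment of the integral operators $\hbar/(\hat x^\vee - \hat x_0^\vee)$ and $\hbar/(\hat y^\vee - \hat y_0^\vee)$ appearing in the substitutions: these are formal inverses of first-order differential operators and must be understood order-by-order in $\hbar$. One must verify that the operator identity $\hat O \circ \mathcal{L} = \mathcal{L} \circ \hat O^\vee$ really holds as an identity on arbitrary formal-series integrands $F$ (not merely on $\psi^\vee$), because it is precisely this strengthening that powers the iterative argument in step two and makes the substitutions of the corollary well-defined without normal-ordering ambiguities.
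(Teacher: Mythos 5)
Your proof is correct and follows essentially the route the paper intends: the corollary is stated without a separate proof precisely because it is the iterated application of Proposition~\ref{prop:ActionOnOperators}, and your two additions --- that the intertwining identity $\hat O\circ\mathcal{L}=\mathcal{L}\circ\hat O^\vee$ holds on arbitrary formal integrands (so it composes, preserving operator order and hence requiring no normal ordering) and that $\mathcal{L}$ is invertible order by order in $\hbar$ via the two-sided formulas of Theorem~\ref{thm:kernels} --- are exactly the implicit steps. No gaps.
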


Interestingly, we achieve a representation of quantum curves with operators $\hat{y}^\vee-\hat{y}_0^\vee=\hbar\frac{d}{dx^\vee}+\hbar\frac{d}{dx^\vee_0}$ in the denominator. By acting with $(\hat{y}^\vee-\hat{y}_0^\vee)^n$ for some $n$ from the left on the quantum curve, these denominators can be successively canceled. However, the commutation relations $[\hat{x}^\vee,\hat{y}^\vee]=-\hbar$,  $[\hat{x}_0^\vee,\hat{y}_0^\vee]=\hbar$ need to be applied. Thus, writing a quantum curve as a polynomial rather than a rational expression in $\hat{x}$ and $\hat{y}$ does not necessarily correspond to the most natural representation.	

For instance, if the family $\omega_{g,n}$ is trivial, i.e., all $\omega_{g,n}=0$ for $2g+n-2>0$ or in the context of Log-TR for all $n>1$, then the quantum curve $\hat{P}_\hbar$ is also trivially obtained. Applying Corollary \ref{cor:dualquantumcurve}, we can easily derive the dual quantum curve $\hat{P}^\vee_\hbar$ for a general base point, which is, in general, already a very involved problem. We will rederive several known quantum curves and also derive new ones through this approach in the subsequent section.

In the literature on quantum curves arising from TR, the base points $(x_0, y_0)$ are usually chosen to have specific values. This choice simplifies the quantum curve, and was first properly analyzed in \cite{JEP_2017__4__845_0}. The first two terms contributing to the wave function, corresponding to $(g,n)=\{(0,1),(0,2)\}$, need to be regularized at singular base points, which amounts to multiplying the wave function by a constant. Tracking general base points has been studied, for instance, in \cite{Eynard:2017jij}, but was found to be quite challenging.

From the perspective of $x$-$y$ duality together with the quantum curve, it was shown in \cite{Weller:2024msm} that the extended Laplace transform of Proposition~\ref{prop:wavefunctionduality} reduces to an ordinary Laplace transform if both base points $x_0$ and $y_0$ are singular. However, not every spectral curve admits such a point where both $x$ and $y$ are simultaneously singular.

Since a simple representation of the quantum curve is achieved by choosing $x_0$ to be a singular point, the following lemma provides the limiting behavior of the wave function—and thus of the quantum curve—in this limit, for general $y_0$.

\begin{lemma}\label{lem:singularpoint}
Let $z_0 \in \Sigma$ be a singular point of $x$, that is $x_0=x(z_0)=\infty$ (including possibly a logarithmic singular point) that is not a ramification point. For Gen-TR, assume that $z_0$ is not in $\mathcal{P}$. Let the wave function $\psi_{x(\tilde{z})}(x(z))$ be regularized (if necessary) by a constant $C_\hbar(z_0)$ such that the limit
\[
\lim_{\tilde{z} \to z_0} C_\hbar(z_0)\, \psi_{x(\tilde{z})}(x(z)) = \psi^{\mathrm{reg}}_{\infty}(x(z))
\]
is regular. Then, for any function $f$ analytic in some ring that contains $y_0$, the following identities hold:
\begin{align*}
\lim_{\tilde{z} \to z_0} C_\hbar(z_0)\, f(x(\tilde{z}))\, \psi_{x(\tilde{z})}(x(z)) &= f(x_0)\, \psi^{\mathrm{reg}}_{\infty}(x(z)), \\
\lim_{\tilde{z} \to z_0} C_\hbar(z_0)\, f\big(-\hbar \tfrac{d}{dx(\tilde{z})} \big)\, \psi_{x(\tilde{z})}(x(z)) &= f(y_0)\, \psi^{\mathrm{reg}}_{\infty}(x(z)).
\end{align*}

\begin{proof}
The first identity is immediate. For the second, let $f$ be of the form $f(t) = \sum_\ell a_\ell t^\ell$. The action of $\left(-\hbar \tfrac{d}{dx(\tilde{z})}\right)^\ell$ on $\psi_{x(\tilde{z})}(x(z))$ produces an expression of the form
\begin{align}\label{lemeq1}
\left( y_0^\ell + \mathcal{O}(\hbar)\, \mathcal{O}((x'(\tilde{z}))^{-1}) \right) \psi_{x(\tilde{z})}(x(z)),
\end{align}
where each factor of $y_0$ arises from the action on the exponential $e^{\frac{1}{\hbar} \int_{\tilde{z}}^z y\, dx}$ inside $\psi_{x(\tilde{z})}(x(z))$. All $\omega_{g,n}$ are regular at $z_0$ since it is not a ramification point. Also in the case of Log-TR or Gen-TR, $z_0$ is not a pole of $\omega_{g,n}$. Multiplying \eqref{lemeq1} by $C_\hbar(z_0)$, the limit reduces to $y_0^\ell \psi^{\mathrm{reg}}_{\infty}(x(z))$ because all higher-order $\hbar$-terms are accompanied by factors of $\frac{1}{x'(\tilde{z})}$, which vanish in the limit $\tilde{z} \to z_0$. Analyticity of $f$ implies convergence of the corrsponding sum over $\ell$. 
\end{proof}
\end{lemma}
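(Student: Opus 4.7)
The first identity is immediate: $f(x(\tilde z))$ depends only on $\tilde z$ and is not acted upon by any differential operator, so it factors out of the limit and yields $f(x_0)\,\psi^{\mathrm{reg}}_\infty(x(z))$ directly from the defining property of $\psi^{\mathrm{reg}}_\infty$.

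For the second identity my plan is to reduce to the monomial case $f(t)=t^\ell$ by expanding $f$ in a convergent power series in a neighborhood of $y_0$ and using analyticity to justify termwise passage to the limit and resummation. It therefore suffices to prove
\[
\lim_{\tilde z\to z_0} C_\hbar(z_0)\left(-\hbar\frac{d}{dx(\tilde z)}\right)^{\!\ell}\! \psi_{x(\tilde z)}(x(z)) \;=\; y_0^\ell\,\psi^{\mathrm{reg}}_\infty(x(z))
\]
for every $\ell\ge 0$.

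To analyze the left-hand side I would apply $-\hbar\,d/dx(\tilde z) = -\hbar\,(x'(\tilde z))^{-1}d/d\tilde z$ directly to the exponential formula of Definition~\ref{def:wavefunction}. The dominant contribution comes from the $\omega_{0,1}=y\,dx$ term: differentiating the lower endpoint of $\hbar^{-1}\!\int_{\tilde z}^z y\,dx$ with respect to $x(\tilde z)$ produces exactly the factor $y(\tilde z)$. All other contributions pick up a factor $1/x'(\tilde z)$: either they arise from differentiating one of the higher $\omega_{g,n}$ integrals in the exponent (each of which carries the prefactor $\hbar^{2g+n-2}$ with $2g+n-2\ge 0$ together with a $1/x'(\tilde z)$), or from iterated derivatives hitting the $y(\tilde z)$ already produced at an earlier step (giving $y'(\tilde z)/x'(\tilde z)$). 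A straightforward induction on $\ell$ then gives
\[
\left(-\hbar\frac{d}{dx(\tilde z)}\right)^{\!\ell}\!\psi_{x(\tilde z)}(x(z)) = \Bigl(y(\tilde z)^\ell + O\!\bigl(\hbar/x'(\tilde z)\bigr)\Bigr)\,\psi_{x(\tilde z)}(x(z)).
\]

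The hypotheses of the lemma enter precisely at the limiting step: because $z_0$ is a pole of $x$ (possibly logarithmic), one has $x'(\tilde z)\to\infty$, so $1/x'(\tilde z)\to 0$; because $z_0$ is neither a ramification point nor an element of $\mathcal P$, the $\omega_{g,n}$ for $(g,n)\neq(0,1),(0,2)$ are regular at $z_0$, so their iterated integrals and derivatives contribute nothing uncontrolled to the remainder. Multiplying by $C_\hbar(z_0)$, which by definition regularizes the factor $\psi_{x(\tilde z)}(x(z))$, the main term yields $y_0^\ell\,\psi^{\mathrm{reg}}_\infty(x(z))$ while all error terms vanish. The principal technical obstacle is the bookkeeping under iterated differentiation: one must verify that each additional application of $-\hbar\,d/dx(\tilde z)$ preserves the uniform $O(1/x'(\tilde z))$ estimate on the remainder, and that the sum over $\ell$ converges in the ring of analyticity of $f$. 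Both points ultimately follow from the fact that the corrections in the exponent of $\psi$ form a formal power series in $\hbar$ whose coefficients are holomorphic at $z_0$ under the stated assumptions.
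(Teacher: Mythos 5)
Your proposal is correct and follows essentially the same route as the paper's own proof: the factor $y(\tilde z)^\ell$ coming from differentiating the lower endpoint of $\tfrac{1}{\hbar}\int_{\tilde z}^{z}y\,dx$, the suppression of all other contributions by $1/x'(\tilde z)\to 0$ at the singular point, the regularity of the stable $\omega_{g,n}$ at $z_0$, and analyticity of $f$ to resum over $\ell$. The only difference is that you spell out the induction on $\ell$ and the uniformity of the $O(1/x'(\tilde z))$ remainder a bit more explicitly than the paper does, which is a refinement rather than a different argument.
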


\subsection{Symplectic duality transformation for the quantum curve}
As discussed before, symplectic duality, as introduced in \cite{Bychkov:2022wgw,Alexandrov:2023oov}, \cite{Alexandrov:2024ajj}, plays a fundamental role in the application of TR in B-model topological string theory, particularly in conjunction with the BKMP-conjecture. However, this was not studied yet in details. In the application of TR for B-model computations, it is essential to choose a so-called framing \cite{Bouchard:2007ys}. As already mentioned in Sec. \ref{sec:logTR}, the choice of a framing is equivalent to a specific symplectic duality transformation. Since the quantum curve is of considerable interest in topological string theory, Chern-Simons theory, and knot theory, it is natural to ask how the quantum curve transforms under the symplectic duality transformation, particularly for a generic base point.

The symplectic duality induces a duality between two families, $\omega_{g,n}$ and $\omega_{g,n}^\dagger$ (see \cite{Alexandrov:2024ajj}), under a transformation of $(x,y)$ in the framework of topological recursion given by:
\begin{align}\label{symplectictrafo}
    (x,y)\quad \mapsto\quad (x+R(y),y),
\end{align}
for some rational function $R$. The symplectic duality has been proven to hold for CEO-TR and for a very large class of cases in Log-TR.

Let $\hat{P}_\hbar$ be the quantum curve that annihilates the wave function $\psi_{x(z_0)}(x(z))$ constructed from the family $\omega_{g,n}$, and let $\hat{P}^\dagger_\hbar$ be the symplectic dual quantum curve annihilating the symplectic dual wave function $\psi^\dagger_{x^\dagger(z_0)}(x^\dagger(z))$ constructed from the symplectic dual family $\omega_{g,n}^\dagger$. The following proposition relates these two quantum curves:

\begin{proposition}\label{prop:sympquacurve}
    Let $\omega_{g,n},\psi_{x(z_0)}(x(z)), \hat{P}_\hbar$ and $\omega_{g,n}^\dagger,\psi^\dagger_{x^\dagger(z_0)}(x^\dagger(z)), \hat{P}_\hbar^\dagger$ be as above. Then, the two quantum curves are related via
     \begin{align}
    	&\hat{P}^\dagger_{\hbar}(\hat{x}^\dagger,\hat{y}^\dagger;\hat{x}_{0}^\dagger,\hat{y}_{0}^\dagger)
    	\\ \notag 
    	& = \hat{P}_\hbar\bigg(
    	\hat{x}^\dagger -R(	\hat{y}^\dagger - \frac{\hbar}{\hat{x}^\dagger - \hat{x}_{0}^\dagger}),
    	\\ \notag 
    	& \qquad \quad
    	\hat{y}^\dagger - \frac{\hbar}{\hat{x}^\dagger - \hat{x}_{0}^\dagger}
    	+  \frac{\hbar}{\hat{x}^\dagger -R(\hat{y}^\dagger - \frac{\hbar}{\hat{x}^\dagger - \hat{x}_{0}^\dagger}) -\hat{x}_{0}^\dagger +R(\hat{y}_{0}^\dagger - \frac{\hbar}{\hat{x}^\dagger- \hat{x}_{0}^\dagger}) },
    	\\ \notag 
    	& \qquad \quad
    	\hat{x}_{0}^\dagger -R(\hat{y}_{0}^\dagger - \frac{\hbar}{\hat{x}^\dagger- \hat{x}_{0}^\dagger}) ,
    	\\ \notag 
    	& \qquad \quad
    	\hat{y}_{0}^\dagger - \frac{\hbar}{\hat{x}^\dagger- \hat{x}_{0}^\dagger} + \frac{\hbar}{\hat{x}^\dagger -R(\hat{y}^\dagger - \frac{\hbar}{\hat{x}^\dagger - \hat{x}_{0}^\dagger}) -\hat{x}_{0}^\dagger +R(\hat{y}_{0}^\dagger - \frac{\hbar}{\hat{x}^\dagger- \hat{x}_{0}^\dagger}) }\bigg).
    \end{align}
\end{proposition}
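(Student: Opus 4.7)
The plan is to factor the symplectic duality $(x,y)\mapsto (x+R(y),y)$ into three elementary moves whose effect on the quantum curve is already described by earlier results, and then compose them. Following the chain pictured in Section~\ref{sec:symduality}, I would write
\begin{align*}
(x,y)\;\longrightarrow\;(y,x)\;\longrightarrow\;(y,x+R(y))\;\longrightarrow\;(x+R(y),y),
\end{align*}
where the first and third arrows are $x$-$y$ swaps and the middle arrow is the elementary linear symplectic shift $(X,Y)\mapsto (X,Y+R(X))$ applied in the swapped coordinates $(X,Y)=(y,x)$.

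For each arrow I would invoke the corresponding transformation rule. The two swaps are handled by Corollary~\ref{cor:dualquantumcurve}, and the middle shift is handled by Example~\ref{ex:symquatrafo}, which under the no-normal-ordering convention is the literal substitution $\hat{y}^\vee \mapsto \hat{y}^\vee - R(\hat{x}^\vee)$ and $\hat{y}_0^\vee \mapsto \hat{y}_0^\vee - R(\hat{x}_0^\vee)$ inside $\hat{P}^\vee_\hbar$. Composing in order yields
\begin{align*}
\hat{P}^\dagger_\hbar(\hat{x}^\dagger,\hat{y}^\dagger;\hat{x}_0^\dagger,\hat{y}_0^\dagger)=\hat{P}^\vee_\hbar\bigl(A,\,B-R(A);\,C,\,D-R(C)\bigr),
\end{align*}
where $A,B,C,D$ are the four substitutions prescribed by Corollary~\ref{cor:dualquantumcurve} in terms of the $\dagger$-operators, and one further application of the same corollary rewrites $\hat{P}^\vee_\hbar$ in terms of $\hat{P}_\hbar$. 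The resulting nested expression then has to be simplified.

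The cleanup rests on two simple observations enabled by the literal-substitution rule. First, the denominator $\hat{x}^\vee-\hat{x}_0^\vee$ appearing in the outer call to $\hat{P}^\vee_\hbar$ becomes
\begin{align*}
\bigl(\hat{y}_0^\dagger - \tfrac{\hbar}{\hat{x}^\dagger-\hat{x}_0^\dagger}\bigr)-\bigl(\hat{y}^\dagger-\tfrac{\hbar}{\hat{x}^\dagger-\hat{x}_0^\dagger}\bigr)=\hat{y}_0^\dagger-\hat{y}^\dagger,
\end{align*}
so each occurrence of $\hbar/(\hat{x}^\vee-\hat{x}_0^\vee)$ equals $-\hbar/(\hat{y}^\dagger-\hat{y}_0^\dagger)$, which cancels exactly against the $-\hbar/(\hat{y}^\dagger-\hat{y}_0^\dagger)$ already present in the $B$ and $D$ arguments. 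This collapses the arguments carrying $R$ to $\hat{x}^\dagger-R(\hat{y}^\dagger-\hbar/(\hat{x}^\dagger-\hat{x}_0^\dagger))$ and $\hat{x}_0^\dagger-R(\hat{y}_0^\dagger-\hbar/(\hat{x}^\dagger-\hat{x}_0^\dagger))$, matching the first and third arguments of the proposition. Second, the denominator $\hat{y}^\vee-\hat{y}_0^\vee$ then equals the difference of these two $R$-shifted expressions, which is precisely the long denominator appearing in the second and fourth arguments of the statement.

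The main obstacle is not conceptual but careful bookkeeping: one must verify that the literal operator substitutions compose correctly and that the identity $1/\hat{A}=-1/(-\hat{A})$ is applied only to the specific compound denominators actually produced by the chain, without crossing any commutator. The key technical input that makes this essentially formal is the no-normal-ordering convention built into both Corollary~\ref{cor:dualquantumcurve} and Example~\ref{ex:symquatrafo}, which is precisely what lets the nested rational expressions collapse to the compact form in the statement.
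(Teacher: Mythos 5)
Your proposal is correct and follows essentially the same route as the paper: the same three-step decomposition (swap, linear shift in the swapped variables, swap back), with Corollary~\ref{cor:dualquantumcurve} for the two swaps and Example~\ref{ex:symquatrafo} for the middle shift, composed by literal substitution. The only difference is that you spell out the final cancellations (in particular $\hat{x}_1-\hat{x}_{1,0}\mapsto \hat{y}_0^\dagger-\hat{y}^\dagger$ and the resulting collapse of the $R$-carrying arguments), which the paper leaves implicit in the phrase ``which yields the final result.''
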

\begin{proof}
        The proof is carried out in three steps. To realize the symplectic duality transformation, we follow the sequence of transformations as in Section~\ref{sec:symduality}:
        \begin{itemize}
            \item[1.] Apply $x$-$y$ duality $(x,y)\mapsto (x_1,y_1) = (y,x)$.
            \item[2.] Apply the symplectic transformation $(x_1,y_1)\mapsto  (x_2,y_2) = (x_1,y_1+R(y_1))$.
            \item[3.] Apply $x$-$y$ duality again $(x_2, y_2)\mapsto (x_3,y_3)=(y_2,x_2)$.
        \end{itemize}
        The transformation of the quantum curve in Step 2 was already discussed in Example \ref{ex:symquatrafo}. The transformation of the quantum curve under the $x$-$y$ duality follows from Corollary \ref{cor:dualquantumcurve}. Applying all three transformations to the quantum curve one after the other yields the desired result. We will call the quantum curve obtained after the first transformation $\hat{P}_1$, after the second transformation $\hat{P}_2$, and after the third transformation $\hat{P}_3=\hat{P}^\dagger$. 

        Following Corollary \ref{cor:dualquantumcurve}, we have
        \begin{align*}
        	& \hat{P}_{1,\hbar}(\hat{x}_1,\hat{y}_1;\hat{x}_{1,0},\hat{y}_{1,0})
        	\\
        	\notag 
        	& = \hat{P}_\hbar\Big(\hat{y}_{1,0} - \tfrac{\hbar}{\hat{x}_1- \hat{x}_{1,0}},
        	\hat{x}_{1,0} - \tfrac{\hbar}{\hat{y}_1 - \hat{y}_{1,0}};
        	\hat{y}_1 - \tfrac{\hbar}{\hat{x}_1 - \hat{x}_{1,0}},
        	\hat{x}_1 - \tfrac{\hbar}{\hat{y}_1 - \hat{y}_{1,0}}\Big),
        \end{align*}
         The transformation $(x_1,y_1)\mapsto  (x_2,y_2) = (x_1,y_1+R(y_1))$ yields
        \begin{align*}
        	&\hat{P}_{2,\hbar}(\hat{x}_2,\hat{y}_2;\hat{x}_{2,0},\hat{y}_{2,0})
\\
& =\hat{P}_{1,\hbar}\big(\hat x_2, \hat{y}_2-R(\hat{x}_2);\hat x_{2,0}, \hat{y}_{2,0}-R(\hat{x}_{2,0})\big)
        	\\
        	& 	= \hat{P}_\hbar\Big(\hat{y}_{2,0}-R(\hat{x}_{2,0}) - \tfrac{\hbar}{\hat{x}_2- \hat{x}_{2,0}},
        	\hat{x}_{2,0} - \tfrac{\hbar}{\hat{y}_2-R(\hat{x}_2) - \hat{y}_{2,0}+R(\hat{x}_{2,0})}
        	\\
        	& \qquad \quad
        	\hat{y}_2-R(\hat{x}_2) - \tfrac{\hbar}{\hat{x}_2 - \hat{x}_{2,0}},
        	\hat{x}_2 - \tfrac{\hbar}{\hat{y}_2-R(\hat{x}_2) -\hat{y}_{2,0}+R(\hat{x}_{2,0})}\Big).
        \end{align*}
                The third transformation leads to 
        \begin{align*}
        	&\hat{P}_{3,\hbar}(\hat{x}_3,\hat{y}_3;\hat{x}_{3,0},\hat{y}_{3,0})\\
        	& = \hat{P}_{2,\hbar}\Big(\hat{y}_{3,0} - \tfrac{\hbar}{\hat{x}_3- \hat{x}_{3,0}},
        	\hat{x}_{3,0} - \tfrac{\hbar}{\hat{y}_3 - \hat{y}_{3,0}};
        	\hat{y}_3 - \tfrac{\hbar}{\hat{x}_3 - \hat{x}_{3,0}},
        	\hat{x}_3 - \tfrac{\hbar}{\hat{y}_3 - \hat{y}_{3,0}}\Big),
        	\\
        	& = \hat{P}_\hbar\Big(\hat{x}_3 - \tfrac{\hbar}{\hat{y}_3 - \hat{y}_{3,0}}-R(	\hat{y}_3 - \tfrac{\hbar}{\hat{x}_3 - \hat{x}_{3,0}}) - \tfrac{\hbar}{\hat{y}_{3,0} - \tfrac{\hbar}{\hat{x}_3- \hat{x}_{3,0}}- 	\hat{y}_3 + \tfrac{\hbar}{\hat{x}_3 - \hat{x}_{3,0}}},
        	\\
        	& \qquad \quad
        		\hat{y}_3 - \tfrac{\hbar}{\hat{x}_3 - \hat{x}_{3,0}} - \tfrac{\hbar}{\hat{x}_{3,0} - \tfrac{\hbar}{\hat{y}_3 - \hat{y}_{3,0}}-R(\hat{y}_{3,0} - \tfrac{\hbar}{\hat{x}_3- \hat{x}_{3,0}}) - \hat{x}_3 + \tfrac{\hbar}{\hat{y}_3 - \hat{y}_{3,0}}+R(\hat{y}_3 - \tfrac{\hbar}{\hat{x}_3 - \hat{x}_{3,0}})}
        	\\
        	& \qquad \quad
        	\hat{x}_{3,0} - \tfrac{\hbar}{\hat{y}_3 - \hat{y}_{3,0}}-R(\hat{y}_{3,0} - \tfrac{\hbar}{\hat{x}_3- \hat{x}_{3,0}}) - \tfrac{\hbar}{\hat{y}_{3,0} - \tfrac{\hbar}{\hat{x}_3- \hat{x}_{3,0}} - 	\hat{y}_3 + \tfrac{\hbar}{\hat{x}_3 - \hat{x}_{3,0}}},
        	\\
        	& \qquad \quad
        	\hat{y}_{3,0} - \tfrac{\hbar}{\hat{x}_3- \hat{x}_{3,0}} - \tfrac{\hbar}{\hat{x}_{3,0} - \tfrac{\hbar}{\hat{y}_3 - \hat{y}_{3,0}}-R(\hat{y}_{3,0} - \tfrac{\hbar}{\hat{x}_3- \hat{x}_{3,0}}) -\hat{x}_3 + \tfrac{\hbar}{\hat{y}_3 - \hat{y}_{3,0}}+R(\hat{y}_3 - \tfrac{\hbar}{\hat{x}_3 - \hat{x}_{3,0}})}\Big),
        \end{align*}
        which yields the final result.
    \end{proof}

\section{List of examples}

\label{sec:examples}

The examples will be organized according to the corresponding version of TR. In the literature, many quantum curves have already been derived for enumerative problems related to Hurwitz theory. These results are obtained using CEO-TR, because, from the perspective of Log-TR, there are no Log-TR-vital singularities. Thus, Log-TR reduces to CEO-TR. However, the dual side does exhibit Log-TR-vital singular points. This implies that these examples inherently require the framework of Log-TR to be fully described.

\subsection{CEO-TR-like examples}
\subsubsection{Rational function example via $x$-$y$ duality}
We derive a quantization of the spectral curve (assuming the setting of CEO-TR) 
\begin{align} \label{eq:Example-Rational-x}
\Big(\Sigma=\mathbb{P}^1,x=\frac{p(z)}{q(z)},y=z,B=\frac{dz_1\,dz_2}{(z_1-z_2)^2}\Big)
\end{align}
via $x$-$y$ duality.
The functions $x$ and $y$ satisfy
\begin{align}\label{pqcurve}
    P(x,y)= p(y)-q(y)x=0
\end{align}
where we assume that $p(y)$ and $q(y)$ are coprime polynomials, hence, the curve is irreducible. 

The dual family $\omega_{g,n}^\vee$ is for the curve \eqref{pqcurve} trivial, that is 
\begin{align*}
    &\omega_{0,1}^\vee= x(z)\,dz,\quad \omega_{0,2}^\vee=\frac{dz_1\,dz_2}{(z_1-z_2)^2},\\
    &\omega_{g,n}^\vee=0\quad \text{for $2g+n-2>0$}.
\end{align*}
The dual wave function wave function $\psi^\vee_{x^\vee(z_0)}(x^\vee(z))$ factorizes in two parts depending on $x^\vee(z_0)$ and $x^\vee(z)$ since $\omega_{0,1}^\vee$ gives the only contribution. This means that the dual quantum curve can just depend on $\hat{x}_0^\vee$ and $\hat{y}_0^\vee$. The dual wave function $\exp({\frac{1}{\hbar}\int_{x^\vee(z_0)}^{x^\vee(z)} \frac{p(\zeta)}{q(\zeta)}d\zeta})$ is annihilated by 
\begin{align*}
    \hat{P}^\vee_\hbar(\hat{x}^\vee,\hat{y}^\vee,\hat{x}_0^\vee,\hat{y}_0^\vee)=p(\hat{x}_0^\vee)-q(\hat{x}_0^\vee)\hat{y}_0^\vee.
\end{align*}
From the dual version of Corollary \ref{cor:dualquantumcurve}, we find that a quantum curve $\hat{P}_{\hbar}(\hat{x},\hat{y};\hat{x}_0,\hat{y}_0)$ for a generic base point $z_0$ is given by 
\begin{align} \label{eq:QuantumCurve-pq}
    \hat{P}_{\hbar}(\hat{x},\hat{y};\hat{x}_0,\hat{y}_0)=
     p\Big(\hat{y} - \frac{\hbar}{\hat{x} - \hat{x}_0}\Big)-
    q\Big(\hat{y} - \frac{\hbar}{\hat{x} - \hat{x}_0}\Big)
    \Big(
    \hat{x} - \frac{\hbar}{\hat{y} - \hat{y}_0}\Big)
\end{align}
By Lemma \ref{lem:singularpoint}, this reduces further at a singular base point $z_0$ with $x(z_0)=\infty$ to
\begin{align*}
    p\big(\hat{y}\big)-q\big(\hat{y}\big)\bigg(\hat{x}-\frac{\hbar}{\hat{y}-y_0}\bigg).
\end{align*}

\subsubsection{Rational function example via symplectic duality} 
We derive a quantization of exactly the same spectral curve~\eqref{eq:Example-Rational-x} via symplectic  duality.
Let us start with the trivial curve $(\Sigma=\mathbb{P}^1,x^\dagger=z,y^\dagger=z,B= \frac{dz_1\,dz_2}{(z_1-z_2)^2})$ with $P(x^\dagger,y^\dagger)=y^\dagger-x^\dagger$ and the following quantum curves:
\begin{align*}
\hat P^\dagger_{1,\hbar}(\hat x^\dagger,\hat y^\dagger; \hat x^\dagger_0, \hat y^\dagger_0)&=-\hat x^\dagger+\hat y^\dagger,
\\
\hat P^\dagger_{2,\hbar}(\hat x^\dagger,\hat y^\dagger; \hat x^\dagger_0, \hat y^\dagger_0)&=\hat x^\dagger-\hat y^\dagger-\hat x^\dagger_0+\hat y^\dagger_0,
\end{align*}
and obtain the curve~\eqref{eq:Example-Rational-x} via the symplectic transformation 
$$
(x,y)=(x^\dagger+R(y^\dagger),y^\dagger),
$$
where $R(y^\dagger)=\frac{p(y^\dagger)}{q(y^\dagger)}-y^\dagger$.

Applying now Proposition \ref{prop:sympquacurve} with the roles of $(x,y)$ and $(x^\dagger,y^\dagger)$ reversed, we have:
\begin{align*}
& \hat{P}_{1,\hbar} (\hat{x},\hat{y};\hat{x}_0,\hat{y}_0)
\\ 
& =
\Big(\hat{y} - \tfrac{\hbar}{\hat{x} - \hat{x}_{0}}
+  \tfrac{\hbar}{\hat{x} -R(\hat{y} - \tfrac{\hbar}{\hat{x} - \hat{x}_{0}}) -\hat{x}_{0} +R(\hat{y}_{0} - \tfrac{\hbar}{\hat{x}- \hat{x}_{0}}) }\Big)
\\
& \quad
-    \Big(	\hat{x} -R(	\hat{y} - \tfrac{\hbar}{\hat{x} - \hat{x}_{0}})\Big),
\\ 
& =
    \hat{y}-\tfrac{\hbar}{\hat{x}-\hat{x}_0}
    +\tfrac{\hbar}{\hat{x}-\tfrac{p(\hat{y}-\tfrac{\hbar}{\hat{x}-\hat{x}_0})}{q(\hat{y}-\tfrac{\hbar}{\hat{x}-\hat{x}_0})}
    	+\hat{y}-\tfrac{\hbar}{\hat{x}-\hat{x}_0}-\hat{x}_0+\tfrac{p(\hat{y}_0-\tfrac{\hbar}{\hat{x}-\hat{x}_0})}{q(\hat{y}_0-\tfrac{\hbar}{\hat{x}-\hat{x}_0})}-\hat{y}_0+\tfrac{\hbar}{\hat{x}-\hat{x}_0}}\\
    &\quad -\hat{x}+\tfrac{p(\hat{y}-\tfrac{\hbar}{\hat{x}-\hat{x}_0})}{q(\hat{y}-\tfrac{\hbar}{\hat{x}-\hat{x}_0})}-\hat{y}+\tfrac{\hbar}{\hat{x}-\hat{x}_0}\\
    & =\tfrac{p(\hat{y}-\tfrac{\hbar}{\hat{x}-\hat{x}_0})}{q(\hat{y}-\tfrac{\hbar}{\hat{x}-\hat{x}_0})}-\hat{x}+\tfrac{\hbar}{\hat{x}-\tfrac{p(\hat{y}-\tfrac{\hbar}{\hat{x}-\hat{x}_0})}{q(\hat{y}-\tfrac{\hbar}{\hat{x}-\hat{x}_0})}+\hat{y}-\hat{x}_0+\tfrac{p(\hat{y}_0-\tfrac{\hbar}{\hat{x}-\hat{x}_0})}{q(\hat{y}_0-\tfrac{\hbar}{\hat{x}-\hat{x}_0})}-\hat{y}_0}.
\end{align*}
A similar computation implies
\begin{align}\nonumber
	\hat{P}_{2,\hbar} (\hat{x},\hat{y};\hat{x}_0,\hat{y}_0)
=
\hat{x}-\tfrac{p(\hat{y}-\tfrac{\hbar}{\hat{x}-\hat{x}_0})}{q(\hat{y}-\tfrac{\hbar}{\hat{x}-\hat{x}_0})}-\hat{x}_0+\tfrac{p(\hat{y}_0-\tfrac{\hbar}{\hat{x}-\hat{x}_0})}{q(\hat{y}_0-\tfrac{\hbar}{\hat{x}-\hat{x}_0})}.
\end{align}
Both operators vanish the wave function $\psi_{x_0}(x)$. Note that the second operator appears also as a summand in the denominator of the last term of the first operator. Thus, this summand can be replaced by $0$, and we see that the following operator still vanishes the wave functions:
\begin{align*}
\frac{p(\hat{y}-\tfrac{\hbar}{\hat{x}-\hat{x}_0})}{q(\hat{y}-\tfrac{\hbar}{\hat{x}-\hat{x}_0})}
-\hat{x}
+\frac{\hbar}{\hat{y}-\hat{y}_0}.
\end{align*}
After multiplying with $q(\hat{y}_0-\frac{\hbar}{\hat{x}-\hat{x}_0})$ from the left, we recover on the $(x,y)$-side the quantum curve~\eqref{eq:QuantumCurve-pq}.

\subsubsection{Airy curve}
The most fundamental but illuminating example of TR is the Airy curve, see Example \ref{ex:KontWitt}: 
$$
(\Sigma=\mathbb{P}^1,x=z^2,y=z,B=\frac{dz_1\,dz_2}{(z_1-z_2)^2})
$$
The quantization from TR perspective was obtained in \cite{Bergere:2009zm} for the singular base point. Applying the previous result for general $p(y)$ and $q(y)$ with $p(y)=y^2$ and $q(y)=1$, we find
\begin{align*}
    \hat{P}_\hbar(\hat{x},\hat{y};\hat{x}_0,\hat{y}_0)=\bigg(\hat{y}-\frac{\hbar}{\hat{x}-\hat{x}_0}\bigg)^2-\bigg(\hat{x}-\frac{\hbar}{\hat{y}-\hat{y}_0}\bigg),
\end{align*}
which indeed annihilates the wave function 
\begin{align*}
    \psi_{x_0}(x)=Ai_+(x)Ai'_-(x_0)-Ai_+'(x)Ai_-(x_0)
\end{align*}
with the asymptotics of the Airy function given by
\begin{align*}
Ai_{\pm}(x)=\frac{e^{\mp\frac{2}{3\hbar} x^{3/2}}}{\sqrt{2\pi} x^{1/4}}\sum_{k=0}^\infty \frac{(6k)!}{1296^k(2k)!(3k)!}\left(\mp\frac{\hbar}{x^{3/2}}\right)^k.
\end{align*}

\subsubsection{Bessel curve}
The Bessel curve 
$$
(\Sigma=\mathbb{P}^1,x=z^2,y=\frac{1}{z},B=\frac{dz_1\,dz_2}{(z_1-z_2)^2})
$$
is of CN-TR type since $y$ is singular at the ramification point of $x$. Applying the previous result for general $p(y)$ and $q(y)$ with $p(y)=1$ and $q(y)=y^2$, we find
\begin{align*}
    \hat{P}_\hbar(\hat{x},\hat{y};\hat{x}_0,\hat{y}_0)=1-\bigg(\hat{y}-\frac{\hbar}{\hat{x}-\hat{x}_0}\bigg)^2\bigg(\hat{x}-\frac{\hbar}{\hat{y}-\hat{y}_0}\bigg).
\end{align*}
This operator annihilates the general wave function $\psi_{x_0}(x)$ of the Bessel curve.

Taking the limit $z_0\to \infty$ which gives $x_0\to \infty$ for the base point corresponds to the limit $y_0\to 0$. Replacing this via Lemma \ref{lem:singularpoint} in the quantum curve operator, we get 
\begin{align*}
    1-\hat{y}^2\bigg(\hat{x}-\frac{\hbar}{\hat{y}}\bigg)=1-\hat{y}\hat{x}\hat{y},
\end{align*}
which is the quantum curve of~\cite{Do:2016odu}.

\subsection{BE-TR-like examples}\label{Sec:Ex-rspin}
The previous results for coprime polynomials $p(y)$ and $q(y)$ can also be applied for curves with higher order ramification points. 

For the so-called $r$-spin spectral curve $(\Sigma=\mathbb{P}^1,x=z^r,y=z,B=\frac{dz_1\,dz_2}{(z_1-z_2)^2})$ with $p(y)=y^r$ and $q(y)=1$ we find
\begin{align*}
    \hat{P}(\hat{x},\hat{y};\hat{x}_0,\hat{y}_0)=\bigg(\hat{y}-\frac{\hbar}{\hat{x}-\hat{x}_0}\bigg)^r-\bigg(\hat{x}-\frac{\hbar}{\hat{y}-\hat{y}_0}\bigg)
\end{align*}
for the general base point $x(z_0)=x_0$. 

Taking the limit $z_0\to \infty$ which gives $x_0\to \infty$ for the base point corresponds to the limit $y_0\to \infty$. Replacing this in the quantum curve operator, we get 
\begin{align*}
    \hat{P}_\hbar(\hat{x},\hat{y})=\hat{y}^r-\hat{x},
\end{align*}
which is the quantum curve of~\cite{JEP_2017__4__845_0}.

For the so-called negative $r$-spin spectral curve $(\mathbb{P}^1,x=z^r,y=\frac{1}{z},\frac{dz_1\,dz_2}{(z_1-z_2)^2})$ with $p(y)=1$ and $q(y)=y^r$, we find
\begin{align*}
    \hat{P}(\hat{x},\hat{y};\hat{x}_0,\hat{y}_0)=1-\bigg(\hat{y}-\frac{\hbar}{\hat{x}-\hat{x}_0}\bigg)^r\bigg(\hat{x}-\frac{\hbar}{\hat{y}-\hat{y}_0}\bigg)
\end{align*}
for general base point $x(z_0)=x_0$. 
Taking the limit $z_0\to \infty$ which gives $x_0\to \infty$ for the base point corresponds to the limit $y_0\to 0$. Replacing this via Lemma \ref{lem:singularpoint} in the quantum curve operator, we get 
\begin{align*}
    \hat{P}_\hbar(\hat{x},\hat{y})=1-\hat{y}^r\bigg(\hat{x}-\frac{\hbar}{\hat{y}}\bigg)=1-\hat{y}^{r-1}\hat{x}\hat{y},
\end{align*}
which is the quantum curve of~\cite{chidambaram2023relationsoverlinemathcalmgnnegativerspin}.

\subsection{Log-TR-like examples}
In this subsection, we derive examples of quantum curves related to Log-TR. It is important to notice that a bunch of examples with logarithmic singularities 
can be derived from CEO-TR. These examples are still valid since all logarithmic singularities in those examples are not Log-TR-vital. However, if we want to derive quantum curves from the $x$-$y$ dual curve or the symplectic dual curve, the dual side has Log-TR-vital points, and we need to take the Log-TR contribution into account. 

If Log-TR is used, either $x$ or $y$ is locally described by a logarithm. This implies that we have exponential terms in the equation of the spectral curve. Note that quantization of an exponential term generates a formal shift operator, for instance, for $e^y\to e^{\hat{y}}=e^{\hbar\frac{d}{dx}}$ we have
\begin{align}\label{exphderivative}
    e^{\hat{y}}f(x)=f(x+\hbar).
\end{align}

All examples below have considerable meaning in different areas of mathematics or mathematical physics and for many of them their quantum curves were studied in the literature. It worth to mention that historically there were two approaches used to derive quantum curves based on their relation to KP integability:
\begin{itemize}
    \item In some cases, one can use the idea proposed in~\cite{zhou2012quantummirrorcurvesmathbb}. In a nutshell, sometimes it is possible to explicitly compute the wave function and then compare the terms of its $\hbar$-expansion using suitable differential operators. 
    \item Another idea is discussed in~\cite{ALS-Quantum,ACEH-main}: one can employ the Kac-Schwarz operators characterizing the point in the semi-infinite Grassmannian in order to derive an equation for the leading basis vector.
\end{itemize}
Both approaches require a manual tuning to produce meaningful formulas.

\subsubsection{$r$-spin $q$-orbifold Hurwitz numbers}
We give an example of considerable interest in enumerative geometry, especially within the theory of Hurwitz numbers, which is a counting problem of special types of ramified coverings. For a detailed enumerative interpretation, we refer to \cite{Mulase:2013pa,DKPS-qr-ELSV}.

Take the spectral curve of the so-called $r$-spin $q$-double Hurwitz numbers $(\mathbb{P}^1, x, y, \frac{dz_1\,dz_2}{(z_1-z_2)^2})$ with $x, y$ parametrized by
\begin{align}\label{hurwitzrq}
    x = \log z - z^{rq}, \qquad y = z^q.
\end{align}
This spectral curve falls into the category of Log-TR. We aim to quantize this curve by applying the symplectic duality transformation of a quantum curve as described in Proposition~\ref{prop:sympquacurve}. Let us start with the following simpler curve:
\begin{align*}
    x^\dagger = \log z, \qquad y^\dagger = z^q,
\end{align*}
which is related to the curve~\eqref{hurwitzrq} via $x = x^\dagger - (y^\dagger)^r$ and $y = y^\dagger$. The curve $(x^\dagger, y^\dagger)$ has vanishing differentials $\omega_{g,n}^\dagger$ except for $(g,n) = (0,1), (0,2)$. The wave function is given by
\begin{align*}
    \psi^\dagger_{x^\dagger(z_0)}(x^\dagger(z)) = \exp\left(\frac{1}{\hbar q}(z^q - z_0^q) + \frac{1}{2}\log(z z_0) + \log\frac{\log z - \log z_0}{z - z_0}\right),
\end{align*}
and after regularization at $z_0 = \infty$:
\begin{align}
    \psi^{\dagger,\mathrm{reg}}_{\infty}(x^\dagger(z)) = \sqrt{z} \, e^{\frac{1}{\hbar q}z^q},
\end{align}
which is annihilated by the quantum curve
\begin{align}
    \hat{y}^\dagger - \frac{\hbar}{2} - e^{q\hat{x}^\dagger}.
\end{align}
Applying Proposition~\ref{prop:sympquacurve} together with Lemma~\ref{lem:singularpoint}, the resulting quantum curve for the $r$-spin $q$-double Hurwitz numbers reads
\begin{align}\label{quantumcurverspinqdoubleHurwitz}
    \hat{y} - \frac{\hbar}{2} - e^{q(\hat{x} - \hat{y}^r)}.
\end{align}

\subsubsection{Comparison with the previous version}
To compare the quantum curve~\eqref{quantumcurverspinqdoubleHurwitz}
with the one previously derived in \cite{Mulase:2013pa}, we have to multiply the wave function by $e^{\frac{x}{2}}$, which is equivalent to multiplying the quantum curve by $e^{-\frac{\hat{x}}{2}}$ from the right. A further multiplication from the left by $e^{\frac{\hat{x}}{2}}$ yields:
\begin{align*}
    e^{\frac{\hat{x}}{2}}\left(\hat{y} - \frac{\hbar}{2} - e^{q(\hat{x} - \hat{y}^r)}\right)e^{-\frac{\hat{x}}{2}}
    &= \hat{y} - e^{\frac{\hat{x}}{2}} e^{q(\hat{x} - \hat{y}^r)} e^{-\frac{\hat{x}}{2}} \\
    &= \hat{y} - e^{\frac{\hat{x}}{2} + q\hat{x}} \, e^{\frac{1}{\hbar(r+1)}\left((\hat{y} - q\hbar)^{r+1} - \hat{y}^{r+1}\right)} e^{-\frac{\hat{x}}{2}}.
\end{align*}
This coincides exactly with the quantum curve derived in \cite[Eq.~(58)]{Mulase:2013pa}. The final computational step follows from the identity
\begin{align}\label{identityrspinhurwitz}
    e^{q\hat{x}} \, e^{\frac{1}{\hbar(r+1)}\left((\hat{y} - q\hbar)^{r+1} - \hat{y}^{r+1}\right)} = e^{q(\hat{x} - \hat{y}^r)},
\end{align}
which is a special case of the Baker–Campbell–Hausdorff formula. The identity \eqref{identityrspinhurwitz} can be proven by showing that both sides satisfy the same first-order differential equation in the variable \( q \), with the same initial condition at \( q = 0 \), where both operators restrict to identity. The differential equation for the right-hand side reads
\begin{align*}
    \frac{d}{dq} e^{q(\hat{x} - \hat{y}^r)} = (\hat{x} - \hat{y}^r) \, e^{q(\hat{x} - \hat{y}^r)},
\end{align*}
while for the left-hand side we compute
\begin{align*}
    &\frac{d}{dq} \left[ e^{q\hat{x}} \, e^{\frac{1}{\hbar(r+1)}\left((\hat{y} - q\hbar)^{r+1} - \hat{y}^{r+1}\right)} \right]\\
    &= \hat{x} \, e^{q\hat{x}} \, e^{\frac{1}{\hbar(r+1)}\left((\hat{y} - q\hbar)^{r+1} - \hat{y}^{r+1}\right)} \\
    &\quad + e^{q\hat{x}} \cdot \left( \tfrac{-\hbar}{\hbar(r+1)} \cdot (r+1)(\hat{y} - q\hbar)^r \right) \cdot e^{\frac{1}{\hbar(r+1)}\left((\hat{y} - q\hbar)^{r+1} - \hat{y}^{r+1}\right)} \\
    &= (\hat{x} - \hat{y}^r) \, e^{q\hat{x}} \, e^{\frac{1}{\hbar(r+1)}\left((\hat{y} - q\hbar)^{r+1} - \hat{y}^{r+1}\right)}.
\end{align*}
Here we have used that commuting \( e^{q\hat{x}} \) through a function \( f(\hat{y}) \) results in a shift of the form \( f(\hat{y} + q\hbar) \). Thus, both sides of \eqref{identityrspinhurwitz} satisfy the same differential equation with the same initial condition, proving the identity \eqref{identityrspinhurwitz}. 

Our derivation of the quantum curve via symplectic duality provides new insight into the structure of the quantum curve for the \( r \)-spin \( q \)-double Hurwitz numbers. The intricate representation given in \cite[Eq.~(58)]{Mulase:2013pa} is shown to be equivalent to the canonical form \eqref{identityrspinhurwitz}, which is not immediately apparent from the original derivation in \textit{loc.~cit.}, but becomes clear via symplectic duality.

\subsubsection{Colored HOMFLY-PT polynomials of  torus knots}
Here we recall an example related to knot theory. Due to important achievements in mathematical physics, it is well known that knot theory has a deep connection to topological string theory and Chern–Simons theory. We will not explore these connections in detail here, but simply cite the spectral curve associated with the $(P,Q)$-torus knot \cite{Brini:2011wi}.
The original derivation of the spectral curve in \cite{Brini:2011wi} is based on an \( SL(2,\mathbb{Z}) \) action on \( (x,y) \) originating from the resolved conifold. This \( SL(2,\mathbb{Z}) \) action is essentially a combination of \( x \)-\( y \) duality and symplectic duality transformations. The combinatorial structure of colored HOMFLY-PT polynomials for the $(P,Q)$-torus knot was analyzed in \cite{Dunin-Barkowski:2017zsd}, where the quantum curve was derived. The corresponding partition function, the so-called extended Ooguri–Vafa, is the expansion of a system of differentials that satisfy TR~\cite{Dunin-Barkowski:2020sez}, see also~\cite{Bychkov:2020yzy}.

A convenient representation of the spectral curve \( (\mathbb{P}^1, x, y, \frac{dz_1\,dz_2}{(z_1 - z_2)^2}) \) is given by 
\begin{align}\label{PQknotcurve}
	x &= \log z - \frac{P}{Q} y, \\\nonumber 
	y &= \log(1 - A^{-1} z) - \log(1 - A z)
\end{align}
(see for instance \cite{Alexandrov:2016edh,Dunin-Barkowski:2020sez}).
The relation of this curve to that in \cite{Brini:2011wi} was discussed in detail in \cite{Dunin-Barkowski:2020sez}.

The simpler corresponding symplectic dual spectral curve reads:
\begin{align*}
	x^\dagger &= \log z, \\\nonumber 
	y^\dagger &= \log(1 - A^{-1} z) - \log(1 - A z).
\end{align*}
This curve is quantized via \eqref{quantumcurvedaggerLogTR} as
\begin{align*}
	e^{\frac{\hat{y}^\dagger}{2}} - e^{\frac{\hbar}{2}} \, \frac{1 - A^{-1} e^{\hat{x}^\dagger}}{1 - A e^{\hat{x}^\dagger}} \, e^{-\frac{\hat{y}^\dagger}{2}} .
\end{align*}
Performing the symplectic duality transformation
\begin{align*}
	x = x ^\dagger- \frac{P}{Q} y^\dagger,
\end{align*}
together with Proposition~\ref{prop:sympquacurve} and Lemma~\ref{lem:singularpoint}, we obtain:
\begin{align}\label{quantumcurvePQ}
	e^{\frac{\hat{y}}{2}} - e^{\frac{\hbar}{2}} \, \frac{1 - A^{-1} e^{\hat{x} - \frac{P}{Q} \hat{y}}}{1 - A e^{\hat{x} - \frac{P}{Q} \hat{y}}} \, e^{-\frac{\hat{y}}{2}}.
\end{align}

\subsubsection{Comparison with the previous version}
To match \eqref{quantumcurvePQ} with the result derived via the Fock space representation in~\cite{Dunin-Barkowski:2017zsd}, we again sandwich the quantum curve with \( e^{\hat{x}/2} \) from the left and \( e^{-\hat{x}/2} \) from the right, as in the \( r \)-spin \( q \)-orbifold example. This, followed by multiplication with \( 1 - A e^{\hat{x} + \frac{P}{Q} \hat{y}} \), leads to:
\begin{align*}
	& e^{\hat{x}/2}(1 - A e^{\hat{x} - \frac{P}{Q} \hat{y}})  \left( e^{\frac{\hat{y}}{2}} - e^{\frac{\hbar}{2}} \, \frac{1 - A^{-1} e^{\hat{x} - \frac{P}{Q} \hat{y}}}{1 - A e^{\hat{x} - \frac{P}{Q} \hat{y}}} \, e^{-\frac{\hat{y}}{2}} \right) e^{-\hat{x}/2} \\
	& ={} (1 - A e^{\hat{x} - \frac{P}{Q} \hat{y} - \frac{P}{Q} \frac{\hbar}{2}}) e^{\frac{\hat{y}}{2} + \frac{\hbar}{4}} - e^{\frac{\hbar}{2}} (1 - A^{-1} e^{\hat{x} - \frac{P}{Q} \hat{y} - \frac{P}{Q} \frac{\hbar}{2}}) e^{-\frac{\hat{y}}{2} - \frac{\hbar}{4}} \\
	& ={} e^{\hbar/4} \left[ \left( e^{\frac{\hat{y}}{2}} - e^{\frac{\hat{y}}{2}} \right) - e^{\hat{x} - \frac{P}{Q} \hat{y} - \frac{P}{Q} \frac{\hbar}{2}} \left( A e^{\frac{\hat{y}}{2}} - A^{-1} e^{\frac{\hat{y}}{2}} \right) \right].
\end{align*}

Note that \( e^{\hat{x} - \frac{P}{Q} \hat{y} - \frac{P}{Q} \frac{\hbar}{2}} = e^{\hat{x}} e^{-\frac{P}{Q} \hat{y}} \) by the Baker–Campbell–Hausdorff formula. Thus, we recover \cite[Thm.~10.1]{Dunin-Barkowski:2017zsd} after the trivial shift \( x \mapsto x + \frac{P}{Q} \log A \).

\begin{remark}
It is remarkable how the initial factor of \( e^{\hbar/2} \) in \eqref{quantumcurvePQ} is canceled through the application of the Baker–Campbell–Hausdorff formula and the commutation of \( e^{\hat{x}/2} \). However, it is precisely this computation that previously prevented a conceptual understanding of such examples within a more general framework. This is now achieved in the context of symplectic duality for quantum curves.
\end{remark}

\subsubsection{General Log-TR example}

The example of $r$-spin $q$-orbifold Hurwitz numbers and colored HOMFLY-PT invariants of the torus knots fall into a very general 
family of examples arising from Orlov–Scherbin partition functions, also known as hypergeometric KP tau functions. There is a number of general approaches that proves for all reasonable cases that the corresponding $n$-point functions can be assembled into a system of differentials that satisfy topological recursion on $(\Sigma=\mathbb{P}^1,x(z),y(z),B=\frac{dz_1dz_2}{(z_1-z_2)^2})$ with
\begin{align} \label{eq:GenHypergeom}
x(z) = \log z - \psi(r(z)), y(z)=r(z),
\end{align}
where $r(z)$ and $\psi(z)$ are some functions on  $\mathbb{P}^1$  such that both $dx(z)$ and $dy(z)$ are meromorphic differentials. This implies some restrictions on $r$ and $\psi$, and there are several very general families of examples, so-called Families I, II, and III, considered in~\cite{Alexandrov:2023tgl-LogTR}. 

This spectral curve data is obtained from a simpler spectral curve $(x^\dagger,y^\dagger)=(\log z, r(z))$ via a symplectic transformation $(x^\dagger,y^\dagger)\mapsto (x,y)=(x^\dagger - \psi(y^\dagger), y^\dagger)$, and we can use it in combination with Proposition~\ref{prop:sympquacurve} to derive the quantum curve for~\eqref{eq:GenHypergeom}. Here the specifics of the functions $r$ and $\psi$ start to play a role.

Above we described how it works for $r$-spin $q$-orbifold Hurwitz numbers and colored HOMFLY-PT invariants of the torus knots, which are illustrative for Family I and II, respectively. Now we illustrate this derivation of a quantum curve in application to the an extended version of Family III in \cite{Alexandrov:2023tgl-LogTR} and corresponds to Case III in \cite{Alexandrov:2024ajj}, described by the spectral curve 
\begin{align} \label{eq:FamIII}
    x(z) &= \log z - \left(T(e^{\gamma_1 y}, \ldots, e^{\gamma_K y}) + \kappa y\right), \\ \notag 
    y(z) &= \sum_i \frac{1}{\alpha_i} \log(z - a_i),
\end{align}
where \( T \) is some rational function, and \( a_i, \alpha_i, \kappa \) are constants such that \( \frac{\gamma_i}{\alpha_j} \in \mathbb{Z} \), and \( dx \) is meromorphic.

First, we consider the simpler spectral curve \( (x^\dagger, y^\dagger) \) of the form
\begin{align*}
    x^\dagger(z) = \log z, \qquad 
    y^\dagger(z) = \sum_i \frac{1}{\alpha_i} \log(z - a_i).
\end{align*}
For this curve, all \( \omega_{g,n}^\dagger \) are explicitly given by Log-TR and take the form
\begin{align*}
    \omega_{g,n}^\dagger = [\hbar^{2g}]\, dx \, \frac{\delta_{n,1}}{\cS(\hbar \partial_x)} y + \frac{dz_1\, dz_2}{(z_1 - z_2)^2}.
\end{align*}
Looking at the wave function, the Bergman kernel \( \omega_{0,2} \) contributes a factor of \( \sqrt{z} = e^{\frac{x}{2}} \). Taking the logarithm of the wave function (with the base point \( z_0 = \infty \) and after a regularization), we find:
\begin{align*}
    && \log \psi^\dagger(x^\dagger) &= \frac{x^\dagger}{2} + \sum_{g=0}^\infty \hbar^{2g-1} [\hbar^{2g}] \int^{x^\dagger} dx^\dagger \, \frac{1}{\cS(\hbar \partial_{x^\dagger})} y^\dagger .
\end{align*}
Through the following chain of equivalent relations
\begin{align*}
    && \left(e^{\frac{\hbar}{2} \partial_{x^\dagger}} - e^{-\frac{\hbar}{2} \partial_{x^\dagger}} \right) \log \psi^\dagger(x^\dagger)
    &= \left(e^{\frac{\hbar}{2} \partial_{x^\dagger}} - e^{-\frac{\hbar}{2} \partial_{x^\dagger}} \right) \frac{x^\dagger}{2} + y^\dagger(e^{x^\dagger}) ,\\
     && \log \psi^\dagger(x^\dagger + \tfrac{\hbar}{2}) - \log \psi^\dagger(x^\dagger - \tfrac{\hbar}{2}) 
    &= \hbar/2 + y^\dagger(e^{x^\dagger}) ,\\
 && \psi(x^\dagger + \tfrac{\hbar}{2}) &= e^{\frac{\hbar}{2} + y^\dagger(e^{x^\dagger})} \, \psi(x^\dagger - \tfrac{\hbar}{2}),
\end{align*}
we conclude that the quantum curve takes the form
\begin{align}\label{quantumcurvedaggerLogTR}
    e^{\frac{\hat{y}^\dagger}{2}} - e^{\frac{\hbar}{2} + y^\dagger(e^{\hat{x}^\dagger})} \, e^{-\frac{\hat{y}^\dagger}{2}} .
\end{align}
Note that \( y^\dagger(e^{\hat{x}^\dagger}) \) in the exponential indicates that the function \( y^\dagger \) is evaluated at \( e^{\hat{x}^\dagger} \), no operator \( \hat{y}^\dagger \) appears here.

Performing the next step, the symplectic duality transformation
\begin{align*}
    x = x^\dagger - \left(T(e^{\gamma_1 y^\dagger}, \ldots, e^{\gamma_K y^\dagger}) + \kappa y^\dagger\right), \qquad y = y^\dagger,
\end{align*}
amounts, by Proposition~\ref{prop:sympquacurve} together with Lemma~\ref{lem:singularpoint}, to the quantum curve
\begin{align}\label{quantumcurvegeneralLogTR}
    e^{\frac{\hat{y}}{2}} - e^{\frac{\hbar}{2} + y\left(e^{\hat{x} + T(e^{\gamma_1 \hat{y}}, \ldots, e^{\gamma_K \hat{y}}) + \kappa \hat{y}}\right)} e^{-\frac{\hat{y}}{2}} .
\end{align}

We should mention that if \( \alpha_i \neq \pm1 \), roots of operators appear. It is not clear to us whether such expressions can actually be transformed into operators without roots by multiplying with additional operators on the quantum curve. To our knowledge, no such examples have been studied in the literature. 

To illustrate this, consider the example:
\begin{align*}
    x = \log z - y, \qquad y = \frac{1}{2} \log(z - 1).
\end{align*}
The corresponding quantum curve then takes the form:
\begin{align*}
    e^{\frac{\hat{y}}{2}} - e^{\frac{\hbar}{2} + y\left(e^{\hat{x} + \hat{y}}\right)} e^{-\frac{\hat{y}}{2}}
    = e^{\frac{\hat{y}}{2}} - e^{\frac{\hbar}{2}} \left(e^{\hat{x} + \hat{y}} - 1\right)^{1/2} e^{-\frac{\hat{y}}{2}}.
\end{align*}

\begin{remark}
	Note that the Families I, II, and III are not non-intersecting. In particular,  
	the spectral curve \eqref{PQknotcurve} is also a special case of~\eqref{eq:FamIII} with parameters \( T = 0 \), \( a_1 = A \), \( \alpha_1 = 1 \), \( a_2 = A^{-1} \), \( \alpha_2 = -1 \). Note that \( \sum_i \frac{1}{\alpha_i} = 0 \), which leads to the following behavior  at the base point \( z_0 = \infty \), \( y_0 = y(z_0) = -2\log A \), which is not divergent. However, the base point \( z_0=\infty \) is still singular for \( x \), and applying Proposition~\ref{prop:sympquacurve}, the divergence of \( x \) at this point yields, by Lemma~\ref{lem:singularpoint}, a simple transformation.
\end{remark}

\subsubsection{Gaiotto curve}
Take the curve from \cite{Borot:2024uos} and change their $x$ and $y$ to 
\begin{align*}
    x\mapsto \tilde{x}=\log x,\qquad y\to \tilde{y}=y\cdot x=z.
\end{align*}
This transformation leaves the symplectic form invariant $dx\wedge dy=d\tilde{x}\wedge d\tilde{y}$. 
More importantly, the 1-form $\omega_{0,1}$ is invariant under this transformation
\begin{align*}
    \omega_{0,1}=y\,dx=\tilde{y}\,d\tilde{x}.
\end{align*}
All algebraic ramification points of $x$ and $\tilde{x}$ do also coincide. Thus, all $\omega_{g,n}$ for $(x,y)$ coincide with $\omega_{g,n}$ of $(\tilde{x},\tilde{y})$. The benefit of working with $\tilde{x},\tilde{y}$ is that the $x$-$y$ dual side of the spectral curve, that is $(\mathbb{P}^1,x^\vee=\tilde{y},y^\vee=\tilde{x},\frac{dz_1\,dz_2}{(z_1-z_2)^2})$ produces with Log-TR explicit expressions for $\omega_{g,n}^\vee$. 

Being more explicit, the new curve reads
\begin{align}\label{Gaiottocurve}
    \tilde{x}(z) = y^\vee(z) =&\log\bigg(-\Lambda^r \frac{\prod_{a=1}^r (P_a+z)}{\prod_{b=1}^{r-1} (Q_a-z)}\bigg);\\ \notag 
    \tilde{y}(z) = x^\vee(x) =&z.
\end{align}
The $x$-$y$ dual multi-differentials read
\begin{align*}
    \omega_{g,n}^\vee=[\hbar^{2g}]\delta_{n,1}\bigg(\frac{1}{\, S(\hbar  \partial_{x^\vee(z)})}y^\vee(z) \bigg)dz+\delta_{g,0}\delta_{n,2}\frac{ dz_1\,dz_2}{(z_1-z_2)^2}.
\end{align*}
The $x$-$y$ dual wave function satisfies the difference equation (no contribution from $\omega_{0,2}^\vee$ since $x^\vee=z$)
\begin{align*}
    -\log \psi^\vee_{x^\vee_0+\frac \hbar 2}(x^\vee)+\log \psi^\vee_{x^\vee_0-\frac \hbar 2}(x^\vee)=y^\vee(x_0^\vee)
\end{align*}
and in operator form 
\begin{align*}
    \bigg(e^{\frac{\hat{y}_0^\vee}{2}}+\Lambda^r \frac{\prod_{a=1}^r (P_a+\hat{x}_0^\vee)}{\prod_{b=1}^{r-1} (Q_a-\hat{x}_0^\vee)}e^{\frac{-\hat{y}_0^\vee}{2}}\bigg)\psi^\vee_{x^\vee_0}(x^\vee)=0.
\end{align*}
Multiplying with $\prod_{b=1}^{r-1} (Q_a-\hat{x}_0^\vee)$ from the left and shifting the wave function $x_0^\vee\to x_0^\vee+\frac{\hbar}{2}$, we can find the representation
\begin{align*}
    \underbrace{\bigg(\prod_{b=1}^{r-1} (Q_a-\hat{x}_0^\vee) e^{\hat{y}_0^\vee}+\Lambda^r \prod_{a=1}^r (P_a+\hat{x}_0^\vee)\bigg)}_{\hat{P}^\vee_\hbar(\hat{x}^\vee,\hat{y}^\vee;\hat{x}_0^\vee,\hat{y}_0^\vee)}\psi^\vee_{x^\vee_0+\frac\hbar 2}(x^\vee)=0.
\end{align*}
The next step is to proceed to the $x$-$y$ dual side via Corollary \ref{cor:dualquantumcurve}. We find the quantum curve
\begin{align*}
	\hat{P}_\hbar(\hat{x},\hat{y};\hat{x}_0,\hat{y}_0)= \prod_{b=1}^{r-1} \big(Q_a-\hat{\tilde{y}}+\frac{\hbar}{\hat{\tilde x} - \hat{\tilde x}_0}\big) e^{\hat{\tilde x}-\frac{\hbar}{\hat{\tilde y}-\hat{\tilde y}_0}}+\Lambda^r \prod_{a=1}^r \big(P_a+\hat{\tilde y}-\frac{\hbar}{\hat{\tilde x} - \hat{\tilde x}_0}\big)
\end{align*}
Consider the base point $z_0=Q_c$ which is a pole of $\tilde{x}$ but not of $\tilde{y}$. For the $b=c$ in the first product, we replace $Q_c$ by $\hat{\tilde y}_0$, and further using Lemma \ref{lem:singularpoint} we get:
\begin{align}\label{Gaiottocurve1}
	(\hat{\tilde y}_0-\hat{\tilde y})\prod_{\substack{b=1 \\ b\neq c}}^{r-1} (Q_a-\hat{\tilde y}) e^{\hat{\tilde x}-\frac{\hbar}{\hat{\tilde y}-\hat{\tilde y}_0}}+\Lambda^r \prod_{a=1}^r (P_a+\hat{\tilde y})
\end{align}
Notice that
\begin{align*}
    &e^{\hat{\tilde x}-\frac{\hbar}{\hat{\tilde y}-\hat{\tilde y}_0}}=e^{\frac{\hat{\tilde y}-\hat{\tilde y}_0}{\hat{\tilde y}-\hat{\tilde y}_0}\hat{\tilde x}-\frac{\hbar}{\hat{\tilde y}-\hat{\tilde y}_0}}=e^{\frac{1}{\hat{\tilde y}-\hat{\tilde y}_0}\hat{\tilde x}(\hat{\tilde y}-\hat{\tilde y}_0)}=\tfrac{1}{\hat{\tilde y}-\hat{\tilde y}_0}e^{\hat{\tilde x}}(\hat{\tilde y}-\hat{\tilde y}_0)
    =&\tfrac{1}{\hat{\tilde y}-\hat{\tilde y}_0}(\hat{\tilde y}-\hbar-\hat{\tilde y}_0)e^{\hat{\tilde x}}.
\end{align*}
Inserting this identity in \eqref{Gaiottocurve1}, we finally obtain the following operator
\begin{align*}
 \prod_{b=1}^{r-1} (Q_b+\hbar\delta_{c,b}-\hat{\tilde y}) e^{\hat{\tilde x}}+\Lambda^r \prod_{a=1}^r (P_a+\hat{\tilde y})
\end{align*}
which coincides with the quantum curve derived in \cite[Prop. 5.12]{Borot:2024uos}.

\subsection{Gen-TR-like example}
Gen-TR is a very recent generalization of topological recursion and its application to quantum curves was not considered yet. In the following, we give the simplest example of a quantum curve which can not be obtained from CEO-TR but from Gen-TR. The difference is essentially in the $\hbar$-correction terms of the quantum curve. We also give a primer of computation for the so-called $(r,s)$-curves, where we work out the $s=2$ case.

\subsubsection{New version of the Airy curve}
We want to represent another version of the quantization of the Airy curve $(x=z^2,y=z)$ by using Gen-TR. Then only special point is at $z=0$. Consider a Gen-TR spectral curve given as 
\begin{align*}
	(\Sigma=\mathbb{P}^1,dx=2z\, dz,dy=dz,B=\frac{dz_1\,dz_2}{(z_1-z_2)^2},\mathcal{P}=\emptyset)
\end{align*}
and the $x$-$y$ dual spectral curve
 $$
 (\Sigma=\mathbb{P}^1,dx^\vee=dz,dy^\vee=2z\,dz,B=\frac{dz_1\,dz_2}{(z_1-z_2)^2},\mathcal{P}^\vee=\{0\}).
 $$

Note that the set $\mathcal{P}$ is chosen empty rather than to be the set of the critical points of, which would recover CEO-TR. Since $\mathcal{P}=\emptyset$, we find that all stable $\omega_{g,n}$ obtained from Gen-TR vanish, that is we have the following explicit formulas:
\begin{align*}
    \omega_{0,1}=y\,dx=2z^2\,dz,\quad \omega_{0,2}=\frac{dz_1\,dz_2}{(z_1-z_2)^2}.
\end{align*}
The wave function is simply obtained 
\begin{align*}
    \psi_{x_0}(x)=\exp\bigg(\frac{2}{\hbar3}\big(x^{3/2}-x_0^{3/2}\big)-\frac{1}{2}\log(2\sqrt{x}\cdot 2 \sqrt{x_0})+\log(\sqrt{x}+\sqrt{x_0})\bigg).
\end{align*}
This wave function satisfies the differential equation
\begin{align} 
	\bigg(\hat{y_0}^2-\hat{x_0}+\frac{\hbar}{\sqrt{\hat{x}}+\sqrt{\hat{x}_0}}+\hbar^2\frac{3\sqrt{\hat{x_0}}-5\sqrt{\hat{x}}}{16\hat{x}_0^2(\sqrt{\hat{x}}+\sqrt{\hat{x}_0})}\bigg)\psi_{x_0}(x)=0.
\end{align}
The $x$-$y$ dual wave function is annihilated by
\begin{align*}
	&\big(\hat{x}^\vee-\tfrac{\hbar}{\hat{y}^\vee - \hat{y}_0^\vee }\big)^2
	-\hat{y}^\vee + \tfrac{\hbar}{\hat{x}^\vee -\hat{x}_0^\vee}
	+\frac{\hbar}{\sqrt{\hat{y}_0^\vee - \tfrac{\hbar}{\hat{x}^\vee -\hat{x}_0^\vee}}+\sqrt{\hat{y}^\vee - \tfrac{\hbar}{\hat{x}^\vee -\hat{x}_0^\vee}}}
	\\
	&
	+\hbar^2\frac{3\sqrt{\hat{y}^\vee - \tfrac{\hbar}{\hat{x}^\vee -\hat{x}_0^\vee}}-5\sqrt{\hat{y}_0^\vee - \tfrac{\hbar}{\hat{x}^\vee -\hat{x}_0^\vee}}}{16\big(\hat{y}^\vee - \tfrac{\hbar}{\hat{x}^\vee -\hat{x}_0^\vee}\big)^2(\sqrt{\hat{y}_0^\vee - \tfrac{\hbar}{\hat{x}^\vee -\hat{x}_0^\vee}}+\sqrt{\hat{y}^\vee - \tfrac{\hbar}{\hat{x}^\vee -\hat{x}_0^\vee}})}.
\end{align*}
At the singular base point $z_0=\infty$ we have $x_0^\vee=y_0^\vee=\infty$, thus the dual wave function is annihilated at the singular base point (after a suitable regularization and multiplication by $(\hat{y}^\vee)^2$ from the left) by 
\begin{align*}
   (\hat{y}^\vee)^2(\hat{x}^\vee)^2-(\hat{y}^\vee)^3-\hbar^2\frac{5}{16}.
\end{align*}
Note that on the dual side all $\omega_{g,n}^\vee$ are actually non-trivial, see also some explicitly computed coefficients in~\cite[Section 7.3]{Alexandrov:2024tjo}.

\subsubsection{A primer on quantization of the $(r,s)$ curves}
\label{sec:rs-curve}
An important spectral curve is the so-called $(r,s)$ curve, see again~\cite[Section 7.3]{Alexandrov:2024tjo}.  It is especially interesting due to its connections to geometry and $W$-constraints~\cite{CGS}.

For  $r,s\in \mathbb{Z}$ such that $r\geq 2$ and $0<s<r$, we define $x=z^r$ and $y=z^{-s}$. The vanishing locus of the corresponding plane curve reads
\begin{align*}
    P(x,y)=x^{s}y^{r}-1.
\end{align*}

In this case the only Gen-TR  special point is $z=0$. Let us choose the Gen-TR spectral curve as
$$
(\Sigma=\mathbb{P}^1,dx(x)=rz^{r-1}dz,dy(x)=-sz^{-s-1}dz,B=\frac{dz_1\,dz_2}{(z_1-z_2)^2},\mathcal{P}=\{0\})
$$ 
and the $x$-$y$ dual curve as  
$$
(\Sigma =\mathbb{P}^1,dx^\vee(z)=-sz^{-s-1}dz,dy^\vee(z)=rz^{r-1}dz,B=\frac{dz_1\,dz_2}{(z_1-z_2)^2},\mathcal{P}^\vee=\emptyset).
$$ 
Since on the dual side $\mathcal{P}$ is the empty set, we have the explicit formulas
\begin{align*}
    \omega_{0,1}^\vee=-sz^{r-s-1} dz,\quad \omega_{0,2}^\vee=\frac{dz_1\,dz_2}{(z_1-z_2)^2}, \quad
    \omega_{g,n}^\vee=0\quad \text{for all $2g+n-2>0$.}
\end{align*}
The wave function becomes 
\begin{align*}
    \psi^\vee_{x^\vee_0}(x^\vee)=
     \exp\Big(&\frac{s}{\hbar(s-r)}\big((x^\vee)^{\frac{s-r}{s}}-(x_0^\vee)^{\frac{s-r}{s}}\big)-\frac{1}{2}\log(s(x^\vee)^{\frac{s+1}{s}}s(x^\vee_0)^{\frac{s+1}{s}})
     \\
     & +\log\Big(\frac{x^\vee-x^\vee_0}{(x^\vee)^{-\frac 1s}-(x_0^\vee)^{-\frac 1s}}\Big)\Big).
\end{align*}
Trying to find the differential equation solved by this wave function is nontrivial for general $x^\vee,x^\vee_0$. In the limit $x^\vee \to 0$ the wave function gets with appropriate regularization the form
\begin{align*}
    \psi^\vee_{x_0^\vee}(0)=\exp\Big(-\frac{s}{\hbar(s-r)}(x_0^\vee)^{\frac{s-r}{s}}+\frac{s+1}{2s}\log\big((x_0^\vee)\Big).
\end{align*}
We find the operator that vanishes the wave function with a rational exponent for $\hat{x}_0$ to be
\begin{align}\label{pqquantumcurve}
    \hat{y}^\vee_0-(\hat{x}_0^\vee)^{-\frac rs}+\hbar \frac{s+1}{2s\hat{x}^\vee_0}.
\end{align}
For the special case of $s=1$, this operator has integer powers and reduces to the $x$-$y$ dual operator of the quantum curve of the negative $r$-spin example of Sec.~\ref{Sec:Ex-rspin}.

For $s> 1$, we should instead construct an operator with just integer powers by an iterative Galois averaging on the left, replacing~\eqref{pqquantumcurve} by
\begin{align*}
	(\hat{y}^\vee_0-J^{s-1}(\hat{x}^\vee_0)^{-\frac rs}+\hbar \tfrac{s+1}{2s\hat{x}^\vee_0})
	\cdots(\hat{y}^\vee_0-J^{1}(\hat{x}^\vee_0)^{-\frac rs}+\hbar \tfrac{s+1}{2s\hat{x}^\vee_0})
	 (\hat{y}^\vee_0-J^{0}(\hat{x}^\vee_0)^{-\frac rs}+\hbar \tfrac{s+1}{2s\hat{x}^\vee_0}),
\end{align*}
where $J^s=1$ is the primitive $s$-th root of unity, to fix the leading term in $\hbar$, and then analogously proceed to perturbatively fix the higher order terms. 

For instance, for $s=2$, we find
\begin{align*}
	&
    \big(\hat{y}^\vee_0+(\hat{x}_0^\vee)^{-\frac r2}+\tfrac{3\hbar }{4\hat{x}^\vee_0}\big)\big(\hat{y}^\vee_0-(\hat{x}_0^\vee)^{-\frac r2}+ \tfrac{3\hbar}{4\hat{x}^\vee_0}\big)
    \\ &
    =(\hat{y}^\vee_0)^2-(\hat{x}^\vee_0)^{-r} -\tfrac{3\hbar^2}{16 (\hat x_0^\vee)^2}-\tfrac {r\hbar}2 (\hat x_0^\vee)^{-\frac r2-1}.
\end{align*}
The last term can be compensated by adding 
\[
-\tfrac {r\hbar}{2\hat{x}^\vee_0}\big(\hat{y}^\vee_0-(\hat{x}_0^\vee)^{-\frac r2}+ \tfrac{3\hbar}{4\hat{x}^\vee_0}\big),
\]
which makes the final operator vanishing the wave function to be
\begin{align*}
(\hat{y}^\vee_0)^2-(\hat{x}^\vee_0)^{-r} -\tfrac {r\hbar}{2\hat{x}^\vee_0}\hat{y}^\vee_0 -\tfrac{9\hbar^2}{16 (\hat x_0^\vee)^2}.
\end{align*}

The $x$-$y$ duality gives in this example the following quantum curve on the non-trivial side for the singular base point $x_0=\infty$, $y_0=0$ (we also multiplied it by $\hat y^r$ from the left):
\begin{align*}
	\hat y^{r}\big(\hat{x}-\frac{\hbar}{\hat{y}}\big)^2-1-\frac {r\hbar}{2}\hat y^{r-1}\big(\hat x-\frac{\hbar}{\hat y}\big) -\frac{9\hbar^2}{16}\hat y^{r-2}.
\end{align*}

\newcommand{\etalchar}[1]{$^{#1}$}


\begin{thebibliography}{ABDB{\etalchar{+}}24d}
\expandafter\ifx\csname url\endcsname\relax
  \def\url#1{\texttt{#1}}\fi
\expandafter\ifx\csname doi\endcsname\relax
  \def\doi#1{\burlalt{doi:#1}{http://dx.doi.org/#1}}\fi
\expandafter\ifx\csname urlprefix\endcsname\relax\def\urlprefix{URL }\fi
\expandafter\ifx\csname href\endcsname\relax
  \def\href#1#2{#2}\fi
\expandafter\ifx\csname burlalt\endcsname\relax
  \def\burlalt#1#2{\href{#2}{#1}}\fi

\bibitem[ABDB{\etalchar{+}}24a]{Alexandrov:2023tgl-LogTR}
A.~Alexandrov, B.~Bychkov, P.~Dunin-Barkowski, M.~Kazarian, and S.~Shadrin.
\newblock Log topological recursion through the prism of {$x-y$} swap.
\newblock {\em Int. Math. Res. Not. IMRN}, (21):13461--13487, 2024,
  \burlalt{2312.16950}{http://arxiv.org/abs/2312.16950}.
\newblock \doi{10.1093/imrn/rnae213}.

\bibitem[ABDB{\etalchar{+}}24b]{Alexandrov:2023oov}
A.~Alexandrov, B.~Bychkov, P.~Dunin-Barkowski, M.~Kazarian, and S.~Shadrin.
\newblock Topological recursion, symplectic duality, and generalized fully
  simple maps.
\newblock {\em J. Geom. Phys.}, 206:Paper No. 105329, 13, 2024,
  \burlalt{2304.11687}{http://arxiv.org/abs/2304.11687}.
\newblock \doi{10.1016/j.geomphys.2024.105329}.

\bibitem[ABDB{\etalchar{+}}24c]{Alexandrov:2024ajj}
A.~Alexandrov, B.~Bychkov, P.~Dunin-Barkowski, M.~Kazarian, and S.~Shadrin.
\newblock Symplectic duality via log topological recursion.
\newblock {\em Commun. Number Theory Phys.}, 18(4):795--841, 2024,
  \burlalt{2405.10720}{http://arxiv.org/abs/2405.10720}.
\newblock \doi{10.4310/cntp.241203001416}.

\bibitem[ABDB{\etalchar{+}}24d]{Alexandrov:2022ydc}
A.~Alexandrov, B.~Bychkov, P.~Dunin-Barkowski, M.~Kazarian, and S.~Shadrin.
\newblock A universal formula for the $x-y$ swap in topological recursion,
  2024, \burlalt{2212.00320}{http://arxiv.org/abs/2212.00320}.
\newblock To appear in J. Eur. Math. Soc.

\bibitem[ABDB{\etalchar{+}}25]{Alexandrov:2024tjo}
A.~Alexandrov, B.~Bychkov, P.~Dunin-Barkowski, M.~Kazarian, and S.~Shadrin.
\newblock Degenerate and {I}rregular {T}opological {R}ecursion.
\newblock {\em Comm. Math. Phys.}, 406(5):Paper No. 94, 2025,
  \burlalt{2408.02608}{http://arxiv.org/abs/2408.02608}.
\newblock \doi{10.1007/s00220-025-05274-w}.

\bibitem[ACEH18]{Alexandrov:2016edh}
A.~Alexandrov, G.~Chapuy, B.~Eynard, and J.~Harnad.
\newblock Weighted {H}urwitz numbers and topological recursion: an overview.
\newblock {\em J. Math. Phys.}, 59(8):081102, 21, 2018,
  \burlalt{1610.09408}{http://arxiv.org/abs/1610.09408}.
\newblock \doi{10.1063/1.5013201}.

\bibitem[ACEH20]{ACEH-main}
A.~Alexandrov, G.~Chapuy, B.~Eynard, and J.~Harnad.
\newblock Weighted {H}urwitz numbers and topological recursion.
\newblock {\em Comm. Math. Phys.}, 375(1):237--305, 2020,
  \burlalt{1806.09738}{http://arxiv.org/abs/1806.09738}.
\newblock \doi{10.1007/s00220-020-03717-0}.

\bibitem[ALS16]{ALS-Quantum}
A.~Alexandrov, D.~Lewanski, and S.~Shadrin.
\newblock Ramifications of {H}urwitz theory, {KP} integrability and quantum
  curves.
\newblock {\em J. High Energy Phys.}, (5):124, 2016,
  \burlalt{1512.07026}{http://arxiv.org/abs/1512.07026}.
\newblock \doi{10.1007/JHEP05(2016)124}.

\bibitem[BBC{\etalchar{+}}23]{Borot:2023wik}
G.~Borot, V.~Bouchard, N.~K. Chidambaram, R.~Kramer, and S.~Shadrin.
\newblock Taking limits in topological recursion, 2023,
  \burlalt{2309.01654}{http://arxiv.org/abs/2309.01654}.

\bibitem[BBC{\etalchar{+}}24]{MR4744795}
G.~Borot, V.~Bouchard, N.~K. Chidambaram, T.~Creutzig, and D.~Noshchenko.
\newblock Higher {A}iry structures, {$\mathcal{ W}$} algebras and topological
  recursion.
\newblock {\em Mem. Amer. Math. Soc.}, 296(1476):v+108, 2024,
  \burlalt{1812.08738}{http://arxiv.org/abs/1812.08738}.
\newblock \doi{10.1090/memo/1476}.

\bibitem[BCGF{\etalchar{+}}23]{Borot:2021thu}
G.~Borot, S.~Charbonnier, E.~Garcia-Failde, F.~Leid, and S.~Shadrin.
\newblock Functional relations for higher-order free cumulants, 2023,
  \burlalt{2112.12184}{http://arxiv.org/abs/2112.12184}.

\bibitem[BCGS25]{CGS}
V.~Bouchard, N.~K. Chidambaram, A.~Giacchetto, and S.~Shadrin.
\newblock Top degrees of {C}hiodo classes, generalized topological recursion,
  and {W}-constraints, 2025.
\newblock In preparation.

\bibitem[BCU24]{Borot:2024uos}
G.~Borot, N.~K. Chidambaram, and G.~Umer.
\newblock Whittaker vectors at finite energy scale, topological recursion and
  hurwitz numbers, 2024, \burlalt{2403.16938}{http://arxiv.org/abs/2403.16938}.

\bibitem[BDBKS24]{Bychkov:2020yzy}
B.~Bychkov, P.~Dunin-Barkowski, M.~Kazarian, and S.~Shadrin.
\newblock Topological recursion for {K}adomtsev-{P}etviashvili tau functions of
  hypergeometric type.
\newblock {\em J. Lond. Math. Soc. (2)}, 109(6):Paper No. e12946, 57, 2024,
  \burlalt{2012.14723}{http://arxiv.org/abs/2012.14723}.
\newblock \doi{10.1112/jlms.12946}.

\bibitem[BDBKS25]{Bychkov:2022wgw}
B.~Bychkov, P.~Dunin-Barkowski, M.~Kazarian, and S.~Shadrin.
\newblock Symplectic duality for topological recursion.
\newblock {\em Trans. Amer. Math. Soc.}, 378(2):1001--1054, 2025,
  \burlalt{2206.14792}{http://arxiv.org/abs/2206.14792}.
\newblock \doi{10.1090/tran/9352}.

\bibitem[BE09]{Bergere:2009zm}
M.~Bergère and B.~Eynard.
\newblock Determinantal formulae and loop equations, 2009,
  \burlalt{0901.3273}{http://arxiv.org/abs/0901.3273}.

\bibitem[BE13]{Bouchard:2012yg}
V.~Bouchard and B.~Eynard.
\newblock Think globally, compute locally.
\newblock {\em J. High Energy Phys.}, (2):143, 2013,
  \burlalt{1211.2302}{http://arxiv.org/abs/1211.2302}.
\newblock \doi{10.1007/JHEP02(2013)143}.

\bibitem[BE17]{JEP_2017__4__845_0}
V.~Bouchard and B.~Eynard.
\newblock Reconstructing {WKB} from topological recursion.
\newblock {\em J. \'{E}c. polytech. Math.}, 4:845--908, 2017,
  \burlalt{1606.04498}{http://arxiv.org/abs/1606.04498}.
\newblock \doi{10.5802/jep.58}.

\bibitem[BEMn12]{Brini:2011wi}
A.~Brini, B.~Eynard, and M.~Mari\~{n}o.
\newblock Torus knots and mirror symmetry.
\newblock {\em Ann. Henri Poincar\'{e}}, 13(8):1873--1910, 2012,
  \burlalt{1105.2012}{http://arxiv.org/abs/1105.2012}.
\newblock \doi{10.1007/s00023-012-0171-2}.

\bibitem[BKMnP09]{Bouchard:2007ys}
V.~Bouchard, A.~Klemm, M.~Mari\~{n}o, and S.~Pasquetti.
\newblock Remodeling the {B}-model.
\newblock {\em Comm. Math. Phys.}, 287(1):117--178, 2009,
  \burlalt{0709.1453}{http://arxiv.org/abs/0709.1453}.
\newblock \doi{10.1007/s00220-008-0620-4}.

\bibitem[BS12]{Bouchard:2011ya}
V.~Bouchard and P.~Su{\protect\l}kowski.
\newblock Topological recursion and mirror curves.
\newblock {\em Adv. Theor. Math. Phys.}, 16(5):1443--1483, 2012,
  \burlalt{1105.2052}{http://arxiv.org/abs/1105.2052}.
\newblock \doi{10.4310/atmp.2012.v16.n5.a3}.

\bibitem[CE06]{Chekhov:2006rq}
L.~Chekhov and B.~Eynard.
\newblock Matrix eigenvalue model: {F}eynman graph technique for all genera.
\newblock {\em J. High Energy Phys.}, (12):026, 29, 2006,
  \burlalt{math-ph/0604014}{http://arxiv.org/abs/math-ph/0604014}.
\newblock \doi{10.1088/1126-6708/2006/12/026}.

\bibitem[CGFG23]{chidambaram2023relationsoverlinemathcalmgnnegativerspin}
N.~K. Chidambaram, E.~Garcia-Failde, and A.~Giacchetto.
\newblock Relations on {$\overline{\mathcal{M}}_{g,n}$} and the negative
  $r$-spin {W}itten conjecture, 2023,
  \burlalt{2205.15621}{http://arxiv.org/abs/2205.15621}.

\bibitem[Che18]{Chen:2009ws}
L.~Chen.
\newblock Bouchard-{K}lemm-{M}arino-{P}asquetti conjecture for
  {$\mathbb{C}^3$}.
\newblock In {\em Topological recursion and its influence in analysis,
  geometry, and topology}, volume 100 of {\em Proc. Sympos. Pure Math.}, pages
  83--102. Amer. Math. Soc., Providence, RI, 2018,
  \burlalt{0910.3739}{http://arxiv.org/abs/0910.3739}.
\newblock \doi{10.1090/pspum/100/03}.

\bibitem[CN19]{Chekhov:2017aot}
L.~Chekhov and P.~Norbury.
\newblock Topological recursion with hard edges.
\newblock {\em Internat. J. Math.}, 30(3):1950014, 29, 2019,
  \burlalt{1702.08631}{http://arxiv.org/abs/1702.08631}.
\newblock \doi{10.1142/S0129167X19500149}.

\bibitem[DBKP{\etalchar{+}}22]{Dunin-Barkowski:2020sez}
P.~Dunin-Barkowski, M.~Kazarian, A.~Popolitov, S.~Shadrin, and A.~Sleptsov.
\newblock Topological recursion for the extended {O}oguri-{V}afa partition
  function of colored {HOMFLY}-{PT} polynomials of torus knots.
\newblock {\em Adv. Theor. Math. Phys.}, 26(4):793--833, 2022,
  \burlalt{2010.11021}{http://arxiv.org/abs/2010.11021}.
\newblock \doi{10.4310/atmp.2022.v26.n4.a1}.

\bibitem[DBKPS23]{DKPS-qr-ELSV}
P.~Dunin-Barkowski, R.~Kramer, A.~Popolitov, and S.~Shadrin.
\newblock Loop equations and a proof of {Z}vonkine's {$qr$}-{ELSV} formula.
\newblock {\em Ann. Sci. \'{E}c. Norm. Sup\'{e}r. (4)}, 56(4):1199--1229, 2023,
  \burlalt{1905.04524}{http://arxiv.org/abs/1905.04524}.
\newblock \doi{10.24033/asens.2553}.

\bibitem[DBMN{\etalchar{+}}17]{Dunin-Barkowski:2013wca}
P.~Dunin-Barkowski, M.~Mulase, P.~Norbury, A.~Popolitov, and S.~Shadrin.
\newblock Quantum spectral curve for the {G}romov-{W}itten theory of the
  complex projective line.
\newblock {\em J. Reine Angew. Math.}, 726:267--289, 2017,
  \burlalt{1312.5336}{http://arxiv.org/abs/1312.5336}.
\newblock \doi{10.1515/crelle-2014-0097}.

\bibitem[DBPSS19]{Dunin-Barkowski:2017zsd}
P.~Dunin-Barkowski, A.~Popolitov, S.~Shadrin, and A.~Sleptsov.
\newblock Combinatorial structure of colored {HOMFLY}-{PT} polynomials for
  torus knots.
\newblock {\em Commun. Number Theory Phys.}, 13(4):763--826, 2019,
  \burlalt{1712.08614}{http://arxiv.org/abs/1712.08614}.
\newblock \doi{10.4310/cntp.2019.v13.n4.a3}.

\bibitem[DDM17]{Do:2014gda}
N.~Do, A.~Dyer, and D.~V. Mathews.
\newblock Topological recursion and a quantum curve for monotone {H}urwitz
  numbers.
\newblock {\em J. Geom. Phys.}, 120:19--36, 2017,
  \burlalt{1408.3992}{http://arxiv.org/abs/1408.3992}.
\newblock \doi{10.1016/j.geomphys.2017.05.014}.

\bibitem[DN18a]{Do:2014ncn}
N.~Do and P.~Norbury.
\newblock Topological recursion for irregular spectral curves.
\newblock {\em J. Lond. Math. Soc. (2)}, 97(3):398--426, 2018,
  \burlalt{1412.8334}{http://arxiv.org/abs/1412.8334}.
\newblock \doi{10.1112/jlms.12112}.

\bibitem[DN18b]{Do:2016odu}
N.~Do and P.~Norbury.
\newblock Topological recursion on the {B}essel curve.
\newblock {\em Commun. Number Theory Phys.}, 12(1):53--73, 2018,
  \burlalt{1608.02781}{http://arxiv.org/abs/1608.02781}.
\newblock \doi{10.4310/CNTP.2018.v12.n1.a2}.

\bibitem[EGF23]{Eynard:2023fil}
B.~Eynard and E.~Garcia-Failde.
\newblock From topological recursion to wave functions and {PDE}s quantizing
  hyperelliptic curves.
\newblock {\em Forum Math. Sigma}, 11:Paper No. e99, 42, 2023,
  \burlalt{1911.07795}{http://arxiv.org/abs/1911.07795}.
\newblock \doi{10.1017/fms.2023.96}.

\bibitem[EGFMO24]{Eynard:2021sxg}
B.~Eynard, E.~Garcia-Failde, O.~Marchal, and N.~Orantin.
\newblock Quantization of classical spectral curves via topological recursion.
\newblock {\em Comm. Math. Phys.}, 405(5):Paper No. 116, 118, 2024,
  \burlalt{2106.04339}{http://arxiv.org/abs/2106.04339}.
\newblock \doi{10.1007/s00220-024-04997-6}.

\bibitem[EO07]{Eynard:2007kz}
B.~Eynard and N.~Orantin.
\newblock Invariants of algebraic curves and topological expansion.
\newblock {\em Commun. Number Theory Phys.}, 1(2):347--452, 2007,
  \burlalt{math-ph/0702045}{http://arxiv.org/abs/math-ph/0702045}.
\newblock \doi{10.4310/CNTP.2007.v1.n2.a4}.

\bibitem[EO15]{Eynard:2012nj}
B.~Eynard and N.~Orantin.
\newblock Computation of open {G}romov-{W}itten invariants for toric
  {C}alabi-{Y}au 3-folds by topological recursion, a proof of the {BKMP}
  conjecture.
\newblock {\em Comm. Math. Phys.}, 337(2):483--567, 2015,
  \burlalt{1205.1103}{http://arxiv.org/abs/1205.1103}.
\newblock \doi{10.1007/s00220-015-2361-5}.

\bibitem[Eyn19]{Eynard:2017jij}
B.~Eynard.
\newblock The geometry of integrable systems. tau functions and homology of
  spectral curves. perturbative definition, 2019,
  \burlalt{1706.04938}{http://arxiv.org/abs/1706.04938}.

\bibitem[FLZ20]{Fang:2016svw}
B.~Fang, C.-C.~M. Liu, and Z.~Zong.
\newblock On the remodeling conjecture for toric {C}alabi-{Y}au 3-orbifolds.
\newblock {\em J. Amer. Math. Soc.}, 33(1):135--222, 2020,
  \burlalt{1604.07123}{http://arxiv.org/abs/1604.07123}.
\newblock \doi{10.1090/jams/934}.

\bibitem[GS12]{Gukov:2011qp}
S.~Gukov and P.~Su{\protect\l}kowski.
\newblock A-polynomial, {B}-model, and quantization.
\newblock {\em J. High Energy Phys.}, (2):070, 2012,
  \burlalt{1108.0002}{http://arxiv.org/abs/1108.0002}.
\newblock \doi{10.1007/JHEP02(2012)070}.

\bibitem[Hoc23a]{Hock:2023qii}
A.~Hock.
\newblock Laplace transform of the {$x - y$} symplectic transformation formula
  in topological recursion.
\newblock {\em Commun. Number Theory Phys.}, 17(4):821--845, 2023,
  \burlalt{2304.03032}{http://arxiv.org/abs/2304.03032}.
\newblock \doi{10.4310/cntp.2023.v17.n4.a1}.

\bibitem[Hoc23b]{Hock:2022pbw}
A.~Hock.
\newblock A simple formula for the {$x$}-{$y$} symplectic transformation in
  topological recursion.
\newblock {\em J. Geom. Phys.}, 194:Paper No. 105027, 26, 2023,
  \burlalt{2211.08917}{http://arxiv.org/abs/2211.08917}.
\newblock \doi{10.1016/j.geomphys.2023.105027}.

\bibitem[Hoc24a]{Hock:2022wer}
A.~Hock.
\newblock On the {$x$}-{$y$} symmetry of correlators in topological recursion
  via loop insertion operator.
\newblock {\em Comm. Math. Phys.}, 405(7):Paper No. 166, 25, 2024,
  \burlalt{2201.05357}{http://arxiv.org/abs/2201.05357}.
\newblock \doi{10.1007/s00220-024-05043-1}.

\bibitem[Hoc24b]{Hock:2023dno}
A.~Hock.
\newblock {$x - y$} duality in topological recursion for exponential variables
  via quantum dilogarithm.
\newblock {\em SciPost Phys.}, 17(2):Paper No. 065, 36, 2024,
  \burlalt{2311.11761}{http://arxiv.org/abs/2311.11761}.
\newblock \doi{10.21468/SciPostPhys.17.2.065}.

\bibitem[Iwa20]{Iwaki:2019zeq}
K.~Iwaki.
\newblock 2-parameter {$\tau$}-function for the first {P}ainlev\'{e} equation:
  topological recursion and direct monodromy problem via exact {WKB} analysis.
\newblock {\em Comm. Math. Phys.}, 377(2):1047--1098, 2020,
  \burlalt{1902.06439}{http://arxiv.org/abs/1902.06439}.
\newblock \doi{10.1007/s00220-020-03769-2}.

\bibitem[MO22]{Marchal:2019bia}
O.~Marchal and N.~Orantin.
\newblock Quantization of hyper-elliptic curves from isomonodromic systems and
  topological recursion.
\newblock {\em J. Geom. Phys.}, 171:Paper No. 104407, 44, 2022,
  \burlalt{1911.07739}{http://arxiv.org/abs/1911.07739}.
\newblock \doi{10.1016/j.geomphys.2021.104407}.

\bibitem[MS15]{Mulase:2012tm}
M.~Mulase and P.~Su{\protect\l}kowski.
\newblock Spectral curves and the {S}chr\"{o}dinger equations for the
  {E}ynard-{O}rantin recursion.
\newblock {\em Adv. Theor. Math. Phys.}, 19(5):955--1015, 2015,
  \burlalt{1210.3006}{http://arxiv.org/abs/1210.3006}.
\newblock \doi{10.4310/ATMP.2015.v19.n5.a2}.

\bibitem[MSS13]{Mulase:2013pa}
M.~Mulase, S.~Shadrin, and L.~Spitz.
\newblock The spectral curve and the {S}chr\"{o}dinger equation of double
  {H}urwitz numbers and higher spin structures.
\newblock {\em Commun. Number Theory Phys.}, 7(1):125--143, 2013,
  \burlalt{1301.5580}{http://arxiv.org/abs/1301.5580}.
\newblock \doi{10.4310/CNTP.2013.v7.n1.a4}.

\bibitem[Nor23]{Norbury:2017eih}
P.~Norbury.
\newblock A new cohomology class on the moduli space of curves.
\newblock {\em Geom. Topol.}, 27(7):2695--2761, 2023,
  \burlalt{1712.03662}{http://arxiv.org/abs/1712.03662}.
\newblock \doi{10.2140/gt.2023.27.2695}.

\bibitem[Wel24]{Weller:2024msm}
Q.~Weller.
\newblock The {L}aplace transform and quantum curves, 2024,
  \burlalt{2406.17081}{http://arxiv.org/abs/2406.17081}.

\bibitem[Zho09]{Zhou:2009ea}
J.~Zhou.
\newblock Local mirror symmetry for one-legged topological vertex, 2009,
  \burlalt{0910.4320}{http://arxiv.org/abs/0910.4320}.

\bibitem[Zho12]{zhou2012quantummirrorcurvesmathbb}
J.~Zhou.
\newblock Quantum mirror curves for {$\mathbb{C}^3$} and the resolved confiold,
  2012, \burlalt{1207.0598}{http://arxiv.org/abs/1207.0598}.

\end{thebibliography}
\end{document}